\newtheorem{defi}{Definition}[section]
\newtheorem{theorem}{Theorem}[section]
\newtheorem{prop}[theorem]{Proposition}
\begin{document}

\title{Scalable Multi-Database Privacy-Preserving Record Linkage using Counting Bloom Filters}

\author{
\alignauthor Dinusha Vatsalan, Peter Christen, and Erhard Rahm$^\dag$ \\
       \affaddr{Research School of Computer Science,} 
       \affaddr{The Australian National University}\\
       \affaddr{Canberra ACT 0200, Australia}\\
       \affaddr{$^\dag$Universit{$\ddot{\mbox{a}}$}t Leipzig,}
       \affaddr{Institut f{$\ddot{\mbox{u}}$}r Informatik,}
       \affaddr{04109, Leipzig, Germany}\\
       \email{\{dinusha.vatsalan, peter.christen\}@anu.edu.au,$^\dag$rahm@informatik.uni-leipzig.de}
}

\maketitle
\begin{abstract}
Privacy-preserving record linkage (PPRL) aims at integrating sensitive information from multiple disparate databases of different organizations. PPRL approaches are increasingly required in real-world application areas such as healthcare, national security, and business. Previous approaches have mostly focused on linking only two databases as well as the use of a dedicated linkage unit. Scaling PPRL to more databases (multi-party PPRL) is an open challenge since privacy threats as well as the computation and communication costs for record linkage increase significantly with the number of databases. We thus propose the use of a new encoding method of sensitive data based on \emph{Counting Bloom Filters} (CBF) to improve privacy for multi-party PPRL. We also investigate optimizations to reduce communication and computation costs for CBF-based multi-party PPRL with and without the use of a dedicated linkage unit. Empirical evaluations conducted with real datasets show the viability of the proposed approaches and demonstrate their scalability, linkage quality, and privacy protection.

\end{abstract}

\keywords{Record linkage, similarity, privacy, multi-party, communication patterns, secure summation} 

\section{Introduction}
\label{sec:intro}

A wide range of real-world applications, including in healthcare,
government services, crime and fraud detection, 
national security, and businesses, require person-related data from multiple sources held by different organizations to be integrated or linked. Integrated data can then be used for data mining and analytics to empower efficient and quality decision making with rich data.  
Integrating data helps improving the quality of data by 
identifying and resolving conflicts in
data values, enriching data with additional detailed information, and
dealing with missing
values~\cite{Chr12}. 

The analysis and mining of data integrated across organizations can be used, for example,
in health outbreak systems that allow the early detection of infectious diseases before they
spread widely around a country or even worldwide. 
Such an application requires data to be integrated across several sources, 
including
human health data, travel data, consumed drug data, 
and even animal health data~\cite{Clif04}.
A second contemporary motivating example is
national security applications that integrate data
from law enforcement agencies, Internet service
providers, businesses, as well as financial 
institutions 
to enable the accurate
identification of crime and fraud, or of terrorism suspects~\cite{Phu12}.

In the absence of unique entity identifiers in the databases
that are to be linked, 
personal identifying attributes (such as names and 
addresses) need to be used for the
linkage. Known as quasi-identifiers (QIDs)~\cite{Vat14}, 
such attribute values are in general assumed to be
sufficiently well correlated with
entities to allow accurate linkage. Using such personal information
across different organizations,
however, often leads to privacy and confidentiality concerns. 
This problem has been addressed through the development of
`privacy-preserving record
linkage' (PPRL)~\cite{Vat13} techniques. PPRL aims to
conduct linkage using only masked (encoded) QIDs
without requiring any sensitive or confidential information
to be exchanged and revealed between the organizations
involved in the linkage.
Generally,
masking is conducted on QIDs 
to transform the original values such that a specific
functional relationship exists between the original and the masked
values~\cite{Vat14}.
While there have been many different approaches proposed for
PPRL (as reviewed in~\cite{Vat13}), most work  
thus far has concentrated on
linking records from only two sources (or parties). 
As the healthcare and national security examples
described above show, linking data from several
sources is however commonly required in practical applications.

\begin{figure*}[!t]
  \centering
  \scalebox{0.85}[0.6]{\includegraphics[width=1.0\textwidth]
                      {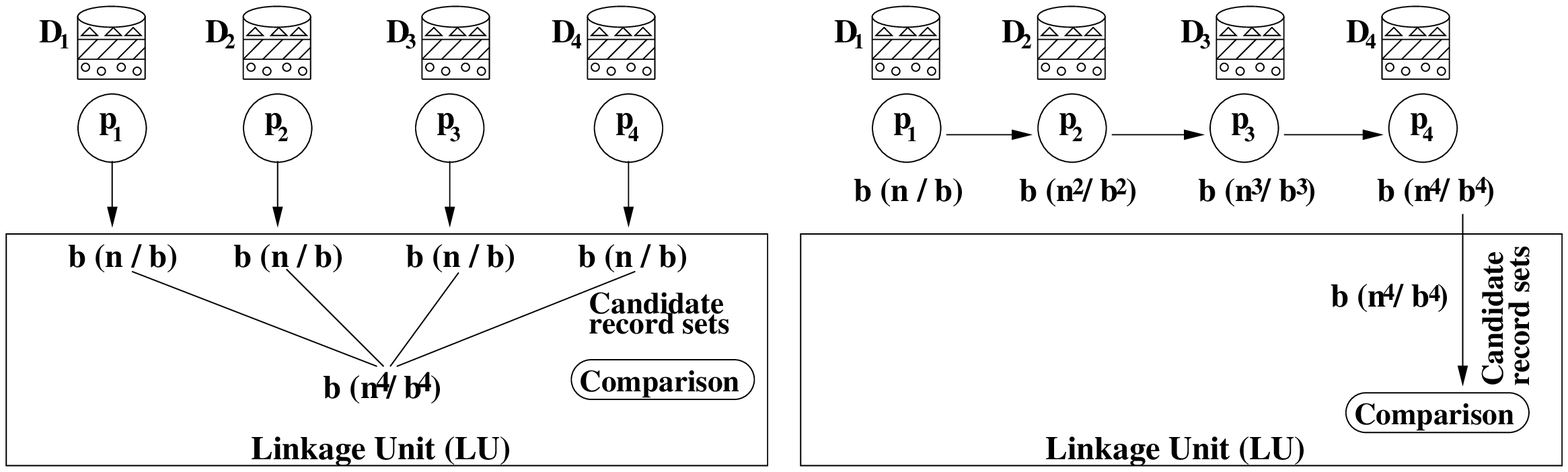}}
  \caption{\small{An overview of traditional 
na\"{i}ve comparison of candidate record sets masked using BFs (left)
and CBFs (right, as will be described in detail in Section~\ref{sec:algorithm}) from $p=4$ parties
using a $LU$.
Databases are indexed/blocked
to reduce the number of candidate record sets such that only records in the same blocks 
are compared and classified. Blocks are illustrated by different patterns in $\mathbf{D}_i$, with $1 \le i \le p$.
$n$ denotes the number of records in the databases (assuming all databases are of equal size)
and $b$ denotes the number of blocks generated (assuming blocks of equal size).
Independent of the masking function and the communication pattern used 
(i.e.\ BFs and direct one-to-one communication in the left figure, and 
CBFs and ring-based communication in the right figure), 
the na\"{i}ve approach
results in exponential complexity of $b~(n^4/b^4)$ candidate sets each consisting of $p=4$
records (one from each party).
}
}
           \label{fig:PPRL_naive}
\end{figure*}

The drawback of the small number of existing PPRL solutions that can link data from multiple parties is that either (1) they only support exact matching (which classifies sets of records as matches if their masked QIDs are exactly the same and as non-matches otherwise) \cite{Kara15,Lai06} or (2) they are applicable to QIDs of categorical data only~\cite{Kan08,Moh11}. However, in many PPRL applications QIDs of string data, such as names and addresses, are required. These QIDs often contain errors and variations which necessitates the use of approximate comparison functions (that are computationally expensive in terms of the number of comparisons) for comparing QIDs.
In this paper, we tackle the multi-party PPRL problem by developing an efficient privacy-preserving approach for approximate matching of (masked) QIDs of string data from multiple records. 

We propose the use of Counting Bloom Filter (CBF) encoding, which is a variation of Bloom filter (BF) encoding~\cite{Vat13}, to enable efficient and approximate privacy-preserving linkage of multiple databases. 
BFs are bit vectors into which values are hash-mapped using hash functions (as we describe in Section~\ref{subsec-bf}). CBFs, on the other hand, are integer vectors that contain count values in each position. Multiple BFs can be summarized as a single CBF using the vector addition operation between BFs.
Previous BF encoding-based PPRL approaches~\cite{Dur12,Sch09} suggest using a linkage unit ($LU$), which is a dedicated external party that can perform linkage by comparing candidate record sets (masked into BFs) from all database owners and calculating their similarities to classify them as matches or non-matches.
Our hypothesis is that, rather than sending BFs from all parties to a linkage unit ($LU$) to calculate their similarity, a single CBF for each candidate record set generated over $p$ parties can be used to calculate their similarity, as illustrated in Figure~\ref{fig:PPRL_naive} (left for BFs and right for CBFs).
Since CBFs contain only the summary information (count values) of multiple records in their positions rather than the actual individual bit values of a single record as in BFs, they 
provide increased privacy compared to BFs, as we discuss in Section~\ref{sec-analysis}. To the best of our knowledge, this privacy aspect of CBFs has so far not been utilized in PPRL.

An additional challenge with multi-party PPRL is that complexity 
increases significantly with multiple parties in terms of both computational efforts and communication volume.
A basic approach would be to send all masked records from all $p$ parties to a $LU$ that can calculate pair-wise similarities between masked records, which is of $O(p^2 \cdot n^2)$ complexity, where $n$ is the size of databases assuming all databases are of equal size. However, identifying a matching set of records from all $p$ parties is not possible with such a basic pair-wise comparison approach.
On the other hand, the number of na\"{i}ve (all-to-all) comparisons between records required across all $p$ databases ($\mathbf{D}_1$, $\mathbf{D}_2$, $\cdots$, $\mathbf{D}_p$) is equal to the product of the size of the databases (i.e.\ $n \times n \times \cdots n = n^p$). 
Addressing this complexity challenge, two-step algorithms have been developed where in the first step a private indexing/blocking technique is used to reduce the number of candidate record sets from $n^p$ to $b \cdot n^p/b^p$, assuming $b$ blocks of equal size. In the second step only these candidate record pairs have to be compared and classified~\cite{Vat13}. 
Compared to the quadratic number of record pairs when linking only two databases ($O(n^2)$), in multi-party PPRL the number of candidate record sets increases exponentially with the number of parties ($O(n^p)$), and thus using existing private blocking techniques would not sufficiently reduce the number of comparisons, as has been empirically studied in several recent approaches~\cite{Ran14,Ran15,Vat14c}. 

Figure~\ref{fig:PPRL_naive} overviews the na\"{i}ve computation and communication of (masked) candidate record sets from multiple parties ($p=4$) using a $LU$. Independent of the used masking function (BFs in the left figure and CBFs in the right figure) and the communication pattern (direct one-to-one communication between each party and the $LU$ in the left figure and ring-based communication among the parties in the right figure, as will be described in Section~\ref{sec:precon}), the na\"{i}ve approach results in exponential complexity.
Efficient communication patterns and advanced filtering approaches for multi-party PPRL therefore need to be developed in order to reduce the potentially huge number of comparisons. 
Moreover, with multiple parties the privacy risk of collusion increases, where a sub-set of parties collude among them in order to learn about another party's (or sub-set of parties') private data. Both examples for the na\"{i}ve method described in Figure~\ref{fig:PPRL_naive} are highly susceptible to collusion. 

In order to overcome these scalability and privacy challenges of multi-party PPRL, we introduce two efficient CBF-based communication patterns that either use a $LU$ or operate symmetrically without a $LU$ (where a trusted external party is not available to act as a $LU$). The proposed approaches can significantly reduce the number of comparisons required between records in contrast to the na\"{i}ve all-to-all comparisons, and thereby improve the scalability while also improving the privacy (reducing the likelihood of collusions) by arranging parties into several groups and by distributing computations among parties.

\smallskip
\textbf{Contributions:}
Our contributions in this paper are: (1) a novel multi-party PPRL protocol based on CBFs and secure summation for efficient, approximate, and private linkage; (2) two variations of extended secure summation protocols for improved privacy against collusion among the data base owners: (a) homomorphic encryption-based and (b) salting-based (using random seed integers); (3) two efficient communication patterns (with and without a $LU$) for reducing the comparison space and risk of collusion between parties and thereby improving scalability and privacy, respectively, in multi-party PPRL; (4) an analysis of the protocol in terms of the three properties of PPRL: scalability (complexity), linkage quality, and privacy; and (5) an empirical evaluation and comparison of our protocol with two baseline approaches using large North Carolina Voter Registration (NCVR)~\cite{Chr13NC} datasets.

\smallskip
\textbf{Outline:} 
In the following section 
we describe the preliminaries. 
In Section~\ref{sec:algorithm} we 
propose our protocol
for multi-party PPRL 
based on CBFs and secure summation,  
where in Section~\ref{sec-ext-secsum} we we propose two extended
secure summation protocols to improve privacy of our approach, 
and in 
Section~\ref{sec-comm-patterns} we introduce two efficient
communication patterns to improve scalability and privacy. 
We analyze our protocol 
in terms of complexity, linkage quality, and privacy
in Section~\ref{sec-analysis}, and  
in Section~\ref{sec-experiment} we
conduct an empirical study on the NCVR datasets
to validate these analyses. We provide a review
of related work in Section~\ref{sec-related}.
Finally, we summarize and
discuss future research directions in
Section~\ref{sec:con}.


\section{Preliminary concepts and building blocks}
\label{sec:precon}

In this section, we define 
the problem of multi-party PPRL and explain
how CBFs can be used
for efficiently calculating similarities (approximate matching) of 
QID values 
between a set of multiple (two or more) 
records (held by different parties) in PPRL. 

We assume $p$ database owners $P_1$, $P_2$, $\cdots$, $P_p$
with their respective databases $\mathbf{D}_1$, $\mathbf{D}_2$, $\cdots$, $\mathbf{D}_p$
(containing sensitive or
confidential identifying information)
participate in the process under the honest-but-curious (HBC)~\cite{Vat13}.
In the HBC model, parties are assumed to follow the protocol without
deviating or sending false information while being curious to learn about other
parties' data.
However, the HBC model does not assume that the parties do not collude
among them to learn about other parties' data~\cite{Lin09}.
We quantify the risk of collusion in multi-party PPRL and the reduction
of risk by our communication patterns in Section~\ref{subsec_privacy_analsis}.
We also assume a set of QID attributes $A$, which will
be used for the linkage, is common to all these databases.
We formally define the problem of PPRL of multiple databases as follows.

\begin{defi}
\textbf{Multi-party PPRL}: 
Assume $P_{1}, \ldots, P_{p}$ are the $p$ owners (parties) of the databases $\mathbf{D}_{1}, \ldots, \mathbf{D}_{p}$, respectively. They wish to determine which of their records $R_{1,i} \in \mathbf{D}_{1}$, $R_{2,j} \in \mathbf{D}_{2}$, $ \ldots$,  $R_{p,k} \in \mathbf{D}_{p}$ match based on the (masked) QIDs of these records according to a decision model $C(R_{1,i}$,$R_{2,j}$, $\ldots$, $R_{p,k})$ that classifies record sets $(R_{1,i}$,$R_{2,j}$, $\ldots$, $R_{p,k})$ into one of the two classes $\mathbf{M}$ of matches and $\mathbf{U}$ of non-matches. Assuming the HBC adversary model, parties $P_{1}, \ldots, P_{p}$ are honest, in that they follow the linkage protocol steps, while they do not wish to reveal their actual records ${R}_{1,i}, \ldots, {R}_{p,k}$ with any other party. They however are prepared to disclose to each other, or to an external party, the actual values of some selected attributes of the record sets that are in class $\mathbf{M}$ to allow analysis. 
\end{defi}

Masking functions used for privacy-preserving algorithms can be 
categorized into two: cryptographic-based secure multi-party computation (SMC)
techniques and perturbation-based techniques~\cite{Vat14}.
The former approach is generally more expensive
with regard to the computation and communication complexities
though it provides strong privacy guarantees and high accuracy~\cite{Lin00}. 
The latter uses efficient techniques and, as opposed to SMC techniques, 
these techniques aim to hide (mask) information about the original
values (to preserve privacy) 
while still allowing to perform approximate matching between the
masked values using the functional relationship between original
and masked data.

We propose an efficient protocol for multi-party PPRL 
using perturbation-based masking.
In this section, we describe the four building blocks of our protocol, 
and in Section~\ref{sec:algorithm} we present our algorithm in detail.
In the following two sections we assume a linkage unit
($LU$) is available to conduct the linkage, 
and in Section~\ref{sec-comm-patterns} we propose a variation where
a $LU$ is not required to conduct the linkage using our protocol.


\subsection{Bloom filter encoding}
\label{subsec-bf}
Bloom filter (BF) encoding has been used as an efficient masking (encoding) technique
in several PPRL solutions~\cite{Dur13,Lai06,Ran14,Sch11,Seh15,Vat12,Vat14c,Vat16c}.
A BF $b_i$ is a bit array data structure of length $l$ bits
where all bits are initially set to $0$. $k$ independent hash
functions, $h_1,h_2, \ldots, h_k$, each with range $1, \ldots l$, are
used to map each of the elements $s$ in a set $S$ into the BF by
setting the bit positions $h_j(s)$ with $1 \le j \le k$ to $1$. 

Schnell et al.~\cite{Sch11} were the first to propose a method for
approximate matching in PPRL of two databases using BFs. In
their work, as in our protocol, the character $q$-grams (sub-strings of length
$q$) of QID values in $A$ of each record in the databases to be
linked are hash-mapped into a BF using $k$ independent hash
functions. 
This method of BF encoding is 
known as Cryptographic Long term Key (CLK) encoding~\cite{Sch11}.

These BFs are then either sent to a $LU$ that
calculates the similarity of pairs of
BFs, as suggested by Schnell et al.~\cite{Sch11} and Durham et al.~\cite{Dur13},
or they are partially exchanged among the database owners to distributively
calculate the similarities of BF pairs/sets, as proposed by
Lai et al.~\cite{Lai06} and Vatsalan and Christen
for two-party~\cite{Vat12} and multi-party~\cite{Vat14c,Vat16c} approaches.
Figure~\ref{fig:bloomfilter}(a) illustrates the encoding of
bigrams ($q=2$) of two QID values `peter' and `pete' into
$l=14$ bits long BFs using $k=2$ hash functions.

\begin{figure}[!t]
  \centering
  \scalebox{1.0}[1.0]{\includegraphics[width=0.47\textwidth]
                      {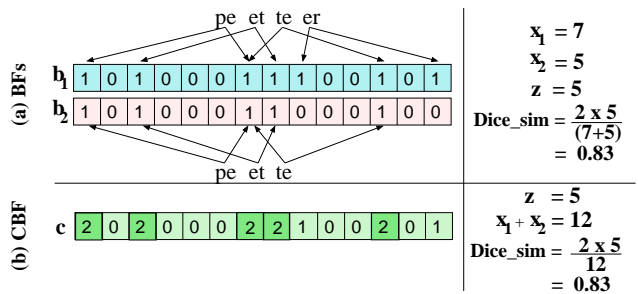}}
  \caption{\small{Similarity calculation of QIDs of two records masked using (a) BFs and (b) CBFs,
                  where $l=14$, $k=2$, and $q=2$.}}
           \label{fig:bloomfilter}
\end{figure}

\subsection{Dice coefficient}
\label{subsec-dc}
Any set-based similarity function
(such as overlap, Jaccard, and Dice coefficient) can be used to calculate the
similarity of pairs or sets of (multiple) BFs. The Dice coefficient has
been used for matching of BFs since it is insensitive to
many matching zeros (bit positions to which no $q$-grams are hash-mapped) in long BFs~\cite{Sch11}.
In future, we aim to investigate how other approximate string comparison functions, 
such as edit distance~\cite{Ris98} and Jaro and Winkler~\cite{Jaro89,Win90}, 
can be extended to calculate
the similarity of more than two values.

\begin{defi}
We define the Dice coefficient similarity of $p$ BFs ($b_1, \cdots,
b_p$) as:
\begin{eqnarray}
\label{eq:Dice_coefficient}
Dice\_sim (b_1, \cdots, b_p) &=& \frac{p \times z}{\sum_{i=1}^{p} x_i} 
\end{eqnarray}
where $z$ is the number of common bit positions that are set to $1$ in
all $p$ BFs (common $1$-bits), and $x_i$ is the number
of bit positions set to $1$ in $b_i$ ($1$-bits), $1 \le i \le p$.
\end{defi}

Figure~\ref{fig:bloomfilter}(a) illustrates the Dice coefficient 
similarity calculation of the two QID values `peter' and `pete'
masked into BFs.

\subsection{Secure summation}
\label{subsec-secsum}
A secure summation protocol~\cite{Clif02} can be used
to securely sum the input values of multiple 
parties ($p \ge 3$), such that no party learns the individual values
of other parties, but only the summed value.
The input can either be a single numeric value or
a vector of numeric values. 
For example, $p$ numeric values ($v_1,v_2,\cdots,v_p$) can be securely summed
by using a random numeric value $r'$ which is sent by a $LU$.
The first party $P_1$ that receives $r'$ calculates the summation
of $r'+ v_1$ and sends to $P_2$. This process is repeated until
the last party sends $r'+v+1+v_2+\cdots+v_p$ to the $LU$ which then
subtracts $r'$ from the summed value to calculate the summation of
$p$ values.
The protocol employs a ring-based communication pattern over all parties 
which allows the $LU$ to learn the final values ($\sum_{i=1}^p v_i$) 
but no party will learn the
values $v_i$ of the other parties.
This basic secure summation (\emph{BSS}) protocol is susceptible to
collusion risk among the parties. In Section~\ref{sec-ext-secsum}
we propose extended secure summation protocols for improved
privacy against collusion.

\subsection{Counting Bloom filters (CBFs)}
\label{subsec-cbf}

In this section we propose a novel method of calculating the similarity
of $p$ BFs using a CBF, which will provide increased privacy
compared to using BFs for similarity calculation, as we will
discuss in Section~\ref{subsec_privacy_analsis}.
A CBF $c$ of $p$ ($p > 1$) BFs is an integer array
data structure of the same length as the BFs ($l$).
It contains the counts
of values in each bit position $\beta, 1 \le \beta \le l$ over a set of $p$ BFs
in its corresponding position, such that $c[\beta] = \sum_{i=1}^p b_i[\beta]$, where
$c[\beta]$ is the count value in the $\beta$ bit position of the CBF $c$ and
$b_i[\beta] \in[0,1]$ provides the value in the bit position $\beta$ of BF $b_i$.
Given $p$ BFs (bit vectors) $b_i$ with $1 \le i \le p$, the CBF
$c$ can be generated by applying a vector addition operation
between the bit vectors such that $c = \sum_i b_i$.

\begin{theorem}
The Dice coefficient similarity of $p$ BFs can be calculated given
only a CBF as:
\begin{eqnarray}
\label{eq:Dice_coefficient_cbf} 
Dice\_sim (b_1, \cdots, b_p) = \frac{p \times |\{\beta: c[\beta] = p, 1 \le \beta \le l\}|}{\sum_{\beta=1}^{l} c[\beta]} \nonumber
\\
\end{eqnarray}
\end{theorem}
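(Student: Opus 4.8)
The plan is to show that the two expressions for $Dice\_sim(b_1,\cdots,b_p)$ — the one in Definition~2.3 using the BFs directly, and the one in the theorem using only the CBF $c$ — are in fact the same quantity, by matching the numerators and denominators term by term. This reduces to two elementary observations about the relationship $c[\beta] = \sum_{i=1}^p b_i[\beta]$ with each $b_i[\beta] \in \{0,1\}$.

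First I would handle the denominator. By definition $\sum_{i=1}^p x_i = \sum_{i=1}^p |\{\beta : b_i[\beta]=1\}| = \sum_{i=1}^p \sum_{\beta=1}^l b_i[\beta]$, and swapping the order of summation gives $\sum_{\beta=1}^l \sum_{i=1}^p b_i[\beta] = \sum_{\beta=1}^l c[\beta]$. So the denominators of the two formulas coincide. This step is purely a Fubini-style interchange of finite sums and carries no obstacle.

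Next I would handle the numerator, i.e. show $z = |\{\beta : c[\beta]=p, 1\le\beta\le l\}|$. Here $z$ is the number of positions that are set to $1$ in \emph{all} $p$ BFs simultaneously. For a fixed position $\beta$, since each $b_i[\beta]\in\{0,1\}$, the sum $c[\beta]=\sum_{i=1}^p b_i[\beta]$ lies in $\{0,1,\ldots,p\}$, and it attains the maximum value $p$ if and only if every summand equals $1$, i.e. if and only if $b_i[\beta]=1$ for all $i$. Hence the set of ``common $1$-bit'' positions is exactly $\{\beta : c[\beta]=p\}$, and counting both sides gives $z = |\{\beta : c[\beta]=p\}|$. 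Substituting the two equalities into Definition~2.3 yields Equation~(\ref{eq:Dice_coefficient_cbf}).

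The argument is essentially immediate, so there is no real ``hard part''; the only thing to be slightly careful about is the boundary reasoning in the numerator — namely that $c[\beta]=p$ characterizes common $1$-bits \emph{because} each contribution is capped at $1$ (this would fail for a general integer vector, which is precisely why the claim is specific to CBFs built from BFs). I would also note in passing that the formula presupposes $\sum_{\beta=1}^l c[\beta] > 0$ (equivalently, at least one input BF is non-empty), so that the ratio is well defined, matching the same implicit assumption already present in Definition~2.3.
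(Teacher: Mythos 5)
Your proof is correct and follows essentially the same route as the paper's: interchange the order of summation to identify the denominator with $\sum_{\beta=1}^{l} c[\beta]$, and observe that $c[\beta]=p$ characterizes the common $1$-bit positions to identify the numerator. If anything you are slightly more careful than the paper, which only states the ``if'' direction ($b_i[\beta]=1$ for all $i$ implies $c[\beta]=p$) whereas your argument makes the needed ``only if'' direction explicit via the observation that the sum of $p$ values in $\{0,1\}$ attains $p$ exactly when every summand is $1$.
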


\begin{proof}
The Dice coefficient similarity of $p$ BFs ($b_1$,$b_2$, $\cdots$, $b_p$) is determined by the sum of $1$-bits ($\sum_{i=1}^p x_i$) in the denominator of Eq.~(\ref{eq:Dice_coefficient}) and the number of common $1$-bits ($z$) in all $p$ BFs in the nominator of Eq.~(\ref{eq:Dice_coefficient}).
The number of $1$-bits in a BF $b_i$ is $x_i = b_i[1] + b_i[2] + \cdots + b_i[l]$, with $1 \le i \le p$. The sum of $1$-bits in all $p$ BFs is therefore $\sum_{i=1}^p x_i = \sum_{i=1}^p b_i[1] + b_i[2] + \cdots + b_i[l]$. The value in a bit position $\beta$ ($1 \le \beta \le l$) of the CBF of these $p$ BFs is $c[\beta] = b_1[\beta] + b_2[\beta] + \cdots + b_p[\beta]$. The sum of values in all bit positions of the CBF is $\sum_{\beta=1}^{l} c[\beta] = \sum_{\beta=1}^{l} b_1[\beta] + b_2[\beta] + \cdots + b_p[\beta]$ which is equal to $\sum_{i=1}^p x_i = \sum_{i=1}^p b_i[1] + b_i[2] + \cdots + b_i[l]$. 
Further, if a bit position $\beta$ ($1 \le \beta \le l$) contains $1$ in all $p$ BFs, i.e.\ $\forall_{i=1}^p b_i[\beta] = 1$, then $c[\beta] = \sum_{i=1}^p b_i[\beta] = p$. Therefore, the common $1$-bits ($z$) that occur in all $p$ BFs can be calculated by counting the number of positions $\beta \in c$ where $c[\beta] = p$, while the sum of number of $1$-bits ($\sum_{i=1}^p x_i$) is calculated by summing the values in bit positions $\beta \in c$, $\sum_{\beta=1}^l c[\beta]$.
\end{proof}

If the $LU$ gets only the CBF $c$
that contains the summed values in the
bit positions of $p$ BFs instead of the actual $p$ BFs, then
the $LU$ can calculate the Dice coefficient of $p$ BFs
using Eq.~(\ref{eq:Dice_coefficient_cbf}) without
learning any information about the individual bit positions
of each party. 
As an example, the Dice coefficient
calculation of the two BFs from the 
two parties $P_1$ and $P_2$ in Figure~\ref{fig:bloomfilter}(a)
is extended by
using a CBF and secure summation 
to calculate the similarity 
by the $LU$, as shown
in Figure~\ref{fig:bloomfilter}(b). 
The information gained by the $LU$ (and/or database owners) 
from a single CBF is less than $p$ BFs
(i.e.\ CBFs provide increased privacy compared to BFs), as
theoretically proven in Section~\ref{subsec_privacy_analsis} and empirically
validated in Section~\ref{sec-experiment}.

\section{Multi-party PPRL algorithm}
\label{sec:algorithm}

Our protocol allows efficient, approximate, and private linking of
records based on their masked QID values in multiple databases
from $p$ $(\ge 2)$ sources/parties.
We first describe a basic na\"{i}ve protocol
based on CBFs,
which we name as `NAI', and in Section~\ref{sec-comm-patterns} we then
propose improved communication patterns for this protocol
to make it more scalable.
The steps of our protocol are listed below. 

\begin{itemize}

\item \textbf{Step 1:}~
The parties agree upon the following parameter values:
      the BF length $l$;
      the $k$ hashing 
      functions $h_1, \ldots, h_k$ to be used; the length (in characters) of 
      grams $q$; a minimum similarity threshold value $s_t$ ($0 \le s_t \le 1$),
      above which a set of records is classified as a match;
      a private blocking function $block(\cdot)$; 
      the blocking keys~\cite{Chr12} $B$ used for blocking; 
      and a set of QID attributes $A$ used for the linkage.

\item \textbf{Step 2:}~
Each party $P_i$ ($1 \le i \le p$) individually 
applies a private blocking
function~\cite{Vat13} $block(\cdot)$ 
to reduce the number of candidate sets of
records $\mathbf{C}$ (from $\prod_i^p n_i$, where $n_i = |\mathbf{D}_i|$ 
is the number of records in $\mathbf{D}_i$ held by party $P_i$),
which otherwise becomes prohibitive even
for moderate $n_i$ and $p$. 
The $block(\cdot)$ function groups records according to their
blocking key values (BKVs)~\cite{Chr11} and only 
records with the same BKV 
(i.e. records in the same block)
from different parties are then 
compared and classified 
using our protocol.
A phonetic-based blocking~\cite{Chr12}
or any of the 
existing multi-party blocking techniques for PPRL~\cite{Ran14,Ran15}
can be used as the $block(\cdot)$ function. 

\begin{table}[!t]
\scriptsize\addtolength{\tabcolsep}{-3pt}
\begin{tabular*}{0.46\textwidth}{c @{\extracolsep{\fill}} lll}
  \label{algo_sec_sim_calc}
    ~ \\[2mm] \hline 
    \\[-2mm]
    \multicolumn{3}{l}{\textbf{Algorithm~1:} Secure summation of BF vectors.}
      \\[0.5mm] \hline
    ~ \\[-3mm]
    \multicolumn{3}{l}{\textbf{Input:}} \\
    {- $\mathbf{D}_i^M$:} & 
      \multicolumn{2}{l}{Party $P_i$'s records masked into BFs $b_i$, $1 \le i \le p$} \\
      {- $\mathbf{R'}$:} & \multicolumn{2}{l}{List of random vectors used by party $P_1$ or the $LU$} \\
      ~ & \multicolumn{2}{l}{for secure summation of BFs} \\
    \multicolumn{3}{l}{\textbf{Output:}} \\
    {- $\mathbf{C}$:} & \multicolumn{2}{l}{Candidate record sets with CBFs} \\[1mm]
    1:&$\mathbf{C} = \{\,\}$ &//initialize $\mathbf{C}$ \\ 
    2:&\textbf{for} $1 \le i \le p$ \textbf{do}: & ~ \\
    3:&~~ \textbf{if} $i = 1$ \textbf{then}:&//first party\\
    4:&~~ ~~ \textbf{for} $rec \in \mathbf{D}_1^M$ \textbf{do}: & ~ \\
    5:&~~ ~~ ~~ $c = \mathbf{R'}[rec] + \mathbf{D}_1^M[rec]$ &//summation \\ 
    6:&~~ ~~ ~~ $\mathbf{C}[rec] = c$ & ~ \\ 
    7:&~~ ~~ $\mathbf{C}.send\_to(P_2)$ &//send $\mathbf{C}$ to $P_2$\\
    8:&~~ \textbf{else}: &//other parties \\
    9:&~~ ~~ $\mathbf{C}.receive\_from(P_{i-1})$ &//receive from $P_{i-1}$\\
    10:&~~ ~~ \textbf{for} $cs \in \mathbf{C}$ \textbf{do}: & ~ \\
    11:&~~ ~~ ~~ \textbf{for} $rec \in \mathbf{D}_i^M$ \textbf{do}: & ~ \\
    12:&~~ ~~ ~~ ~~ $c = \mathbf{C}[cs]$ +$ \mathbf{D}_i^M[rec]$ &//addition \\ 
    13:&~~ ~~ ~~ ~~ $cs = cs.append(rec)$ & \\
    13:&~~ ~~ ~~ ~~ $\mathbf{C}[cs]$ $= c$ & ~ \\ 
    14:&~~ ~~ \textbf{if} $i \neq p$ \textbf{then}: & ~ \\ 
    15:&~~ ~~ ~~ $\mathbf{C}.send\_to(P_{i+1})$ &//send $\mathbf{C}$ to $P_{i+1}$\\ 
    16:&~~ ~~ \textbf{else}: & ~ \\ 
    17:&~~ ~~ ~~ $\mathbf{C}.send\_to(P_{1}~or~LU)$ &//send $\mathbf{C}$ to $P_1$/$LU$\\ [1mm]
      \hline \\ [2mm]
  \end{tabular*}
\end{table}

\item \textbf{Step 3:}~
Each party $P_i$
hash-maps the $q$-gram values of QIDs $A$ of each of
its $n_i$ records in their respective databases $\mathbf{D}_i$
into $n_i$ BFs (one BF per record in $\mathbf{D}_i$) 
of length $l$ using
the hash functions $h_1, \ldots, h_k$. It is crucial
to set the BF related parameters in an optimal
way that balances all three properties of PPRL 
(complexity, quality, and privacy). We further discuss parameter
setting for BFs used in our protocol
in Section~\ref{sec-analysis}.

\item \textbf{Step 4:}~
In the next step, the parties initiate a secure summation protocol
to securely perform vector addition between their BFs in order
to generate a CBF for each set of candidate records $\mathbf{C}$.
This secure summation can be initiated 
by an external linkage unit $LU$ that provides a vector (of length $l$)
of random values 
or by one of the parties (as will be discussed 
further in Section~\ref{sec-comm-patterns}). 
This process 
is outlined in Algorithm~1.

In lines~3-6, the $LU$ or the party that initiated the communication (we assume $P_1$)
sends or uses, respectively, a random vector $\mathbf{R'}[rec]$
for each record $rec \in \mathbf{D}_1^M$
to sum
with the party's BF vector $b_i \in \mathbf{D}_i^M$ ($i=1$) 
by applying a vector addition operation.
The summed vectors $\mathbf{R'}[rec]+b_i \in \mathbf{C}$  
are then sent
to party $P_{i+1}$ 
in line~7.
Party $P_{i}$, $2 \le i \le p$ receives the summed vectors from $P_{i-1}$ (line~9)
and adds its BF vector $b_i \in \mathbf{D}_i^M$ to each candidate set
$cs \in \mathbf{C}$ 
and sends the summed vectors to the next party $P_{i+1}$.
This process is repeated until the last party (i.e.\ $P_p$)
adds its $b_p$ vector to each received summed vector
$\mathbf{R'}[rec] + \sum_{i=1}^{p-1} b_i$ from
party $P_{p-1}$ and sends the final summed vector back to the $LU$ or $P_1$
for each candidate set $cs$, 
as shown in lines~8-17 in Algorithm~1. 

\begin{table}[!t]
\scriptsize\addtolength{\tabcolsep}{-3pt}
\begin{tabular*}{0.45\textwidth}{c @{\extracolsep{\fill}} lll}
  \label{algo_classify}
    ~ \\[2mm] \hline 
    \\[-2mm]
    \multicolumn{3}{l}{\textbf{Algorithm~2:} Similarity calculation of record sets.} 
      \\[0.5mm] \hline
    ~ \\[-3mm]
    \multicolumn{3}{l}{\textbf{Input:}} \\
      {- $\mathbf{R'}$:} & \multicolumn{2}{l}{ List of random vectors used by party $P_1$ or the $LU$ for} \\
      ~ & \multicolumn{2}{l}{ secure summation of BFs} \\
    {- $\mathbf{C}$:} & 
      \multicolumn{2}{l}{ Candidate record sets with CBFs} \\
     ~ & \multicolumn{2}{l}{ (including random vectors) from party $P_{p}$} \\
    {- $s_t$:} & \multicolumn{2}{l}{ Minimum similarity threshold to classify record sets} \\
    \multicolumn{3}{l}{\textbf{Output:}} \\
    {- $\mathbf{M}$:} & \multicolumn{2}{l}{ Matching record sets} \\[1mm]
    1:&$\mathbf{M} = \{\,\}$&//initialize $\mathbf{M}$\\ 
    2:&$\mathbf{C}.receive\_from(P_p)$&//receive CBFs\\
    3:&\textbf{for} $cs \in \mathbf{C}$ \textbf{do}: & ~ \\
    4:&~~ $c = \mathbf{C}[cs] - \mathbf{R'}[cs]$&//subtract\\ 
    5:&~~ $Dice\_sim(cs) = \frac{p \times |\{\beta: c[\beta] = p, 1 \le \beta \le l\}|}{\sum_{\beta=1}^{l} c[\beta]}$&//Eq.~(\ref{eq:Dice_coefficient_cbf})\\ 
    6:&~~ \textbf{if} $Dice\_sim(cs) \ge s_t$ \textbf{then}:&//a match \\  
    7:&~~ ~~ $\mathbf{M}.append([cs,Dice\_sim(cs)])$ & ~ \\
    8.&\textbf{for} $1 \le i \le p$ \textbf{do}: & ~ \\
    9:&~~ $\mathbf{M}.send\_to(P_i)$&~\\ [1mm] 
      \hline \\ [2mm]
  \end{tabular*}
\end{table}

\item \textbf{Step 5:}~
Finally, either the $LU$ or the first party, $P_1$, 
that has provided the random vectors $\mathbf{R'}$,  
subtracts $\mathbf{R'}[rec]$ from the final summed vector
$\mathbf{R'}[rec] + \sum_{i=1}^{p} b_i$ as received from
the last party $P_p$
for each candidate set $cs \in \mathbf{C}$. 
This is achieved by subtraction of vectors
(i.e. $c =\mathbf{R'}[rec] + \sum_{i=1}^{p} b_i - \mathbf{R'}[rec]$), which is a special case of
vector addition, as outlined in lines~2-4 in Algorithm~2. 
As shown in lines~5-7, the $LU$ or $P_1$ then 
calculates the Dice coefficient similarity of each resulting CBF
      $c$ following
      Eq.~(\ref{eq:Dice_coefficient_cbf}) to classify the compared sets
      of records (BFs) within a block 
      into matches and non-matches based on
      the similarity threshold $s_t$.
      The final similarities of matching sets of records 
      will be sent to all parties $P_i$, with $1 \le i \le p$, in lines~8-9
      in Algorithm~2.

\end{itemize}

The basic secure summation protocol (\emph{BSS}) used in this 'NAI' protocol
is vulnerable to collusion risk among the parties.
Further, the number of candidate sets to be compared for multi-party linkage
in this na\"{i}ve method (\emph{NAI}) is 
exponential in the number of parties and their dataset sizes,
which is prohibitively large to be practically feasible
(even with the existing private blocking and filtering approaches
employed)~\cite{Vat14c}.
Efficient communication patterns among the parties
therefore need to be employed in order to make multi-party PPRL
scalable and viable in real applications that require data of very large
sizes from many parties to be integrated. 
In the following sections we propose extended secure summation protocols
and two efficient communication patterns 
that not only improve the scalability of our multi-party PPRL protocol 
but also make it more secure (with less possible collusion among the parties).

\begin{figure*}[!t]
  \centering
  \scalebox{0.9}[0.8]{\includegraphics[width=1.0\textwidth]
                      {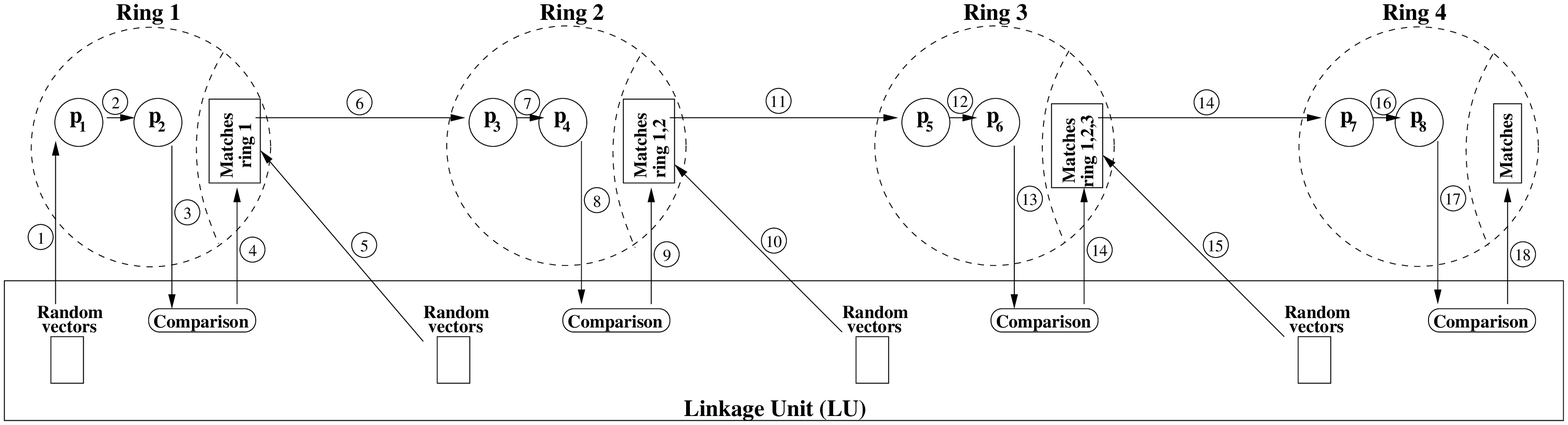}}
  \caption{\small{Sequential (\emph{SEQ}) communication pattern for comparison of CBFs from $p=8$ parties 
using a $LU$, as described in Section~\ref{subsec-seq_comm}.
In this example, the minimum number of parties per ring is set to $r_m=2$. Communication steps are shown as circled numbers.}
}
           \label{fig:cbf_seq}
\end{figure*}

\section{Extended secure summation}
\label{sec-ext-secsum}

The basic secure summation protocol (\emph{BSS}) described
in Section~\ref{subsec-secsum} is susceptible
to collusion risk by the parties involved, where if two or more parties
collude they are able to infer the input of another
party. In order to overcome this risk (to improve privacy), we propose
two extended secure summation protocols.

\begin{itemize}

\item Homomorphic-based secure summation (\emph{HSS}):
The partially homomorphic Paillier cryptosystem~\cite{Pai99}
is a reliable secure multi-party computation (SMC) technique
for performing secure joint computation among several parties.
In the \emph{HSS} approach a pair of private and public keys is used
for encrypting and decrypting the individual BF values.
Successive encryption of the same value using the same public key
generates different encrypted values with high probability, 
and decrypting the encrypted values using a private key returns
the correct original value.
The public key is known to all parties while the private key is
known only to the $LU$. Each party $P_i$ receives
summed vectors containing encrypted values to which $P_i$ adds 
its encrypted $\mathbf{b}_i$ vector (using the public key) and sends the
encrypted summed vectors to the next party. Without knowing the private key
a party $P_i$ cannot decrypt the received vectors and therefore 
colluding with a party to learn another party's $\mathbf{b}_j$ (with $i \neq j$)
would be impossible.

\item Salting-based (using random seed integers) secure summation (\emph{SSS}):
The \emph{HSS} approach provides a secure solution 
compared to the \emph{BSS} approach
against collusion attacks at the cost of 
an excessive computation overhead, 
making it not scalable to linking multiple large databases.
Therefore, we propose the \emph{SSS} approach 
to provide security against collusion attacks in an efficient way.
Salting has been used to defend against dictionary attacks
on one-way hash functions where an additional string 
is concatenated with
a value to be encrypted~\cite{Sch15}.
We adopt a similar concept in the \emph{SSS} approach where the salting key is
an additional random integer used by each party $P_i$ individually 
when performing the secure summation. The salting key generated and 
used by each party is sent only
to the $LU$ and therefore a party $P_i$'s BF values cannot be identified
by means of collusion among the parties, as without knowing the salting
key of $P_i$ its BF values cannot be learned.
Since the salting keys are random integer values, performing secure summation
of BFs with the salting keys does not add any additional 
computation and communication overhead, except the communication of
salting keys from parties to the $LU$.

\end{itemize}


\section{Communication patterns}
\label{sec-comm-patterns}

In this section, we propose two
variations of improved communication patterns for 
our protocol based on CBFs
for multi-party scenarios 
with and without a linkage unit ($LU$).
The main idea of these improved communication patterns is to
exploit the facts that most candidate sets are true
non-matches (due to the class imbalance problem of record 
linkage~\cite{Chr11}), and that a true matching set must have a
high similarity between any sub-set of records in that set.
Hence for multi-party PPRL applications with
many parties it is promising to determine 
partial matches for a sub-set of parties and consider 
additional parties only for these partial matches.

The parties are first arranged into rings of size $r$ (with $r \le p$)
based on the value for the parameter $r_m$, 
the minimum number of parties per ring ($r \ge r_m$). The value of $r_m$ needs
to be carefully chosen, as it has a trade-off between scalability (complexity) and
privacy. The higher the value for $r_m$ is, the better the privacy of the
protocol because the resulting CBFs are more difficult
to exploit with an inference attack (by mapping the CBFs to known
values in a global database to infer their underlying unencoded values),
as will be discussed in Section~\ref{subsec_privacy_analsis}.
On the other hand, higher values of $r_m$
results in lower scalability to large datasets across many parties
because the number of comparisons required per ring exponentially increases
with the ring size $r$.

\begin{figure*}[!t]
  \centering
  \scalebox{0.8}[0.6]{\includegraphics[width=1.0\textwidth]
                      {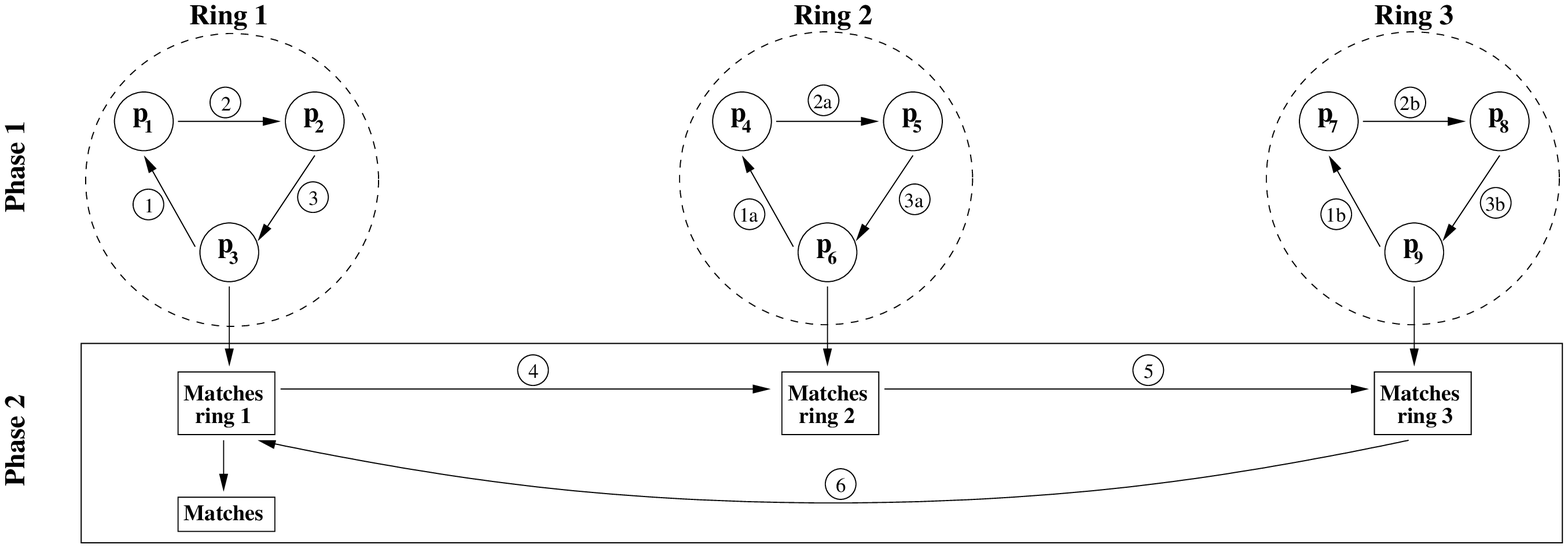}}
  \caption{\small{Ring by ring (\emph{RBR}) communication pattern for comparison of CBFs from $p=9$ parties 
without a $LU$, as described in Section~\ref{subsec-rbr_comm}.
In this example, the minimum number of parties per ring is $r_m=3$. Communication steps are shown as circled numbers.}
}
           \label{fig:cbf_rbr}
\end{figure*}

\subsection{Sequential communication}
\label{subsec-seq_comm}

In this first proposed communication pattern (\emph{SEQ}), which requires a $LU$, 
the matches found in
one ring are compared with the candidate sets in the next ring
resulting in a set of matches from both rings which will then be
compared with the candidate sets in the following ring, and so on. This communication
is carried out sequentially until the matches 
from all rings are found by the $LU$.
Figure~\ref{fig:cbf_seq} illustrates the \emph{SEQ}
approach for four rings with $r=2$ parties in each ring.

\begin{table}[!t]
\scriptsize\addtolength{\tabcolsep}{-3pt}
\begin{tabular*}{0.47\textwidth}{c @{\extracolsep{\fill}} lll} 
  \label{algo_classify}
    ~ \\[2mm] \hline 
    \\[-2mm]
    \multicolumn{3}{l}{\textbf{Algorithm~3:} Comparison of CBFs using \emph{SEQ} (by $LU$).}
      \\[0.5mm] \hline
    ~ \\[-3mm]
    \multicolumn{3}{l}{\textbf{Input:}} \\
    {- $\mathbf{D}^M_i$:} & \multicolumn{2}{l}{ Party $P_i$'s records with RIDs and BFs, $1 \le i \le p$} \\
    {- $r_m$:} & \multicolumn{2}{l}{ Minimum ring size, with $r_m \ge 2$} \\
    {- $s_t$:} & \multicolumn{2}{l}{ Minimum similarity threshold to classify record sets} \\
    \multicolumn{3}{l}{\textbf{Output:}} \\
    {- $\mathbf{M}$:} & \multicolumn{2}{l}{ Matching record sets} \\[1mm]
    1:&$\mathbf{C} = \{[\,]\}$; $\mathbf{M} = \{[\,]\}$&//initialize $\mathbf{C}$, $\mathbf{M}$\\
    2:&$rings = group([P_1,P_2,\cdots, P_p],r_m)$&//group parties \\
    3:&\textbf{for} $ring \in rings$ \textbf{do}:&//iterate rings\\
    4:&~~ \textbf{for} $i \in ring$ \textbf{do}:&//iterate parties \\    
    5:&~~ ~~ $records = \mathbf{D}^M_i.values()$&//party $P_i$'s records \\
    6:&~~ ~~ \textbf{for} $rec\_set \in \mathbf{C}$ \textbf{do}: & ~ \\
    7:&~~ ~~ ~~ \textbf{for} $rec \in records$ \textbf{do}: & ~ \\
    8:&~~ ~~ ~~ ~~ $rec\_set.append(rec) $&//$\mathbf{C}$ of the ring\\
    9:&~~ $\mathbf{C'} = sec\_sum(\mathbf{C},ring)$&//summation\\
    10:&~~ \textbf{for} $cs \in \mathbf{C'}$ \textbf{do}: & ~ \\
    11:&~~ ~~ $c = \mathbf{C'}[cs]$ & ~ \\
    12:&~~ ~~ $Dice\_sim(cs) = \frac{p \times |\{\beta: c[\beta] = p, 1 \le \beta \le l\}|}{\sum_{\beta=1}^{l} c[\beta]}$ &//Eq.~(\ref{eq:Dice_coefficient_cbf})\\ 
    13:&~~ ~~ \textbf{if} $Dice\_sim(cs) \ge s_t$ \textbf{then}:&//a match\\
    14:&~~ ~~ ~~ $\mathbf{M}.append(cs)$ & ~ \\ 
    15:&~~ $\mathbf{C} = \mathbf{M}$&~\\ [1mm] 
      \hline \\ [2mm]
  \end{tabular*}
\end{table}

Algorithm~3 details the steps of the \emph{SEQ} communication pattern.
First the parties are grouped in rings using the function 
$group(\cdot)$ (line~2 in Algorithm~3).
Different number of parties ($\ge r_m$) can be grouped into different rings.
The value for $r_m$ can be agreed upon by the parties to any value $r_m \ge 2$
at the trade-off between privacy and scalability, as will be discussed
in Section~\ref{sec-analysis}.

To minimize the number of comparisons that are required, 
the grouping of parties into rings is ideally done in an ascending order
of the size of their datasets.
In this way, the first ring will generate a smaller number of matches,
which then have to be compared with the candidate sets
in the following rings.
This reduces the computational complexity compared to 
initially larger number of matches being generated by the 
first ring if the parties in the first ring are the ones with 
the largest databases.

A loop is iterated over rings in line~3 of Algorithm~3 and then the
parties in rings are iterated in line~4.
Each party in $ring$ retrieves its records in line~5 and a loop is
iterated over these records in line~7 to append each of them to
every candidate record set (from previous party, except for the first party
that appends to empty sets) stored in $\mathbf{C}$ (line~8).
Next, a secure summation protocol is applied in line~9 using 
$sec\_sum(\cdot)$ function
on the candidate sets of BFs $\mathbf{C}$ identified in $ring$
in order to generate CBFs for each candidate set. In lines~10-12, each CBF $c$ generated
is then used to calculate the Dice similarity ($Dice\_sim$) of the candidate 
set ($cs$) and if the $Dice\_sim(cs) \ge s_t$ (line~13) then $cs$ is added into
the list of matches $\mathbf{M}$
identified in the ring (line~14), which will then be used as an input (line~15)
in the next ring.

The risk of collusion between parties 
in order to identify data about another party can be reduced in
this approach by using different BF encodings
in different iterations. For example, if the encoding
used in ring 1 in Figure~\ref{fig:cbf_seq} is unknown to
parties in ring 2, then the collusion between the $LU$ and parties in ring
2 would not reveal sufficient information to infer the
actual values masked in the BFs in ring 1. 
Hence, parties might wish to be grouped with certain other parties in the
same ring for better privacy protection. 

\begin{table}[!t]
\scriptsize\addtolength{\tabcolsep}{-6pt}
\begin{tabular*}{0.47\textwidth}{c @{\extracolsep{\fill}} lll}
  \label{algo_classify}
    ~ \\[2mm] \hline 
    \\[-2mm]
    \multicolumn{3}{l}{\textbf{Algorithm~4:} Comparison of CBFs using \emph{RBR} (without $LU$).}
      \\[0.5mm] \hline
    ~ \\[-3mm]
    \multicolumn{3}{l}{\textbf{Input:}} \\
    {- $\mathbf{D}^M_i$:} & \multicolumn{2}{l}{ Party $P_i$'s records with RIDs and BFs, $1 \le i \le p$} \\
    {- $r_m$:} & \multicolumn{2}{l}{ Minimum ring size, with $r_m \ge 3$} \\
    {- $s_t$:} & \multicolumn{2}{l}{ Minimum similarity threshold to classify record sets} \\
    \multicolumn{3}{l}{\textbf{Output:}} \\
    {- $\mathbf{M}$:} & \multicolumn{2}{l}{ Matching record sets} \\[1mm]
    1:  & $rings = group([P_1,P_2,\cdots, P_p],r_m)$&//group parties\\
    2:  & \textbf{for} $ring \in rings$ \textbf{do}:&//phase 1 \\
    3:  & ~~ $\mathbf{C}_{ring} = \{[\,]\}$; $\mathbf{M}_{ring} = \{[\,]\}$; $r = len(ring)$ & ~ \\
    4:  & ~~ \textbf{for} $i \in ring$ \textbf{do}:&//iterate parties\\    
    5:  & ~~ ~~ $records = \mathbf{D}^M_i.values()$ & ~ \\
    6:  & ~~ ~~ \textbf{for} $rec\_set \in \mathbf{C}_{ring}$ \textbf{do}: & ~ \\
    7:  & ~~ ~~ ~~ \textbf{for} $rec \in records$ \textbf{do}: & ~ \\
    8:  & ~~ ~~ ~~ ~~ $rec\_set.append(rec) $&//$\mathbf{C}$ in the ring\\
    9:  & ~~ $\mathbf{C'}_{ring} = sec\_sum(\mathbf{C}_{ring},ring)$&//summation\\
    10:  & ~~ \textbf{for} $cs \in \mathbf{C'}_{ring}$ \textbf{do}: & ~ \\
    11:  & ~~ ~~ $c = \mathbf{C'}_{ring}[cs]$ & ~ \\
    12:  & ~~ ~~ $Dice\_sim(cs) = \frac{r \times |\{\beta: c[\beta] = r, 1 \le \beta \le l\}|}{\sum_{\beta=1}^{l} c[\beta]}$&//Eq.~(\ref{eq:Dice_coefficient_cbf})\\ 
    13:  & ~~ ~~ \textbf{if} $Dice\_sim(cs) \ge s_t$ \textbf{then}:&//a match\\
    14:  & ~~ ~~ ~~ $\mathbf{M}_{ring}.append(cs)$&//$\mathbf{M}$ in the ring\\
    15:  & $\mathbf{C} = \{[\,]\}$; $\mathbf{M} = \{[\,]\}$&//initialize\\ 
    16:  & \textbf{for} $ring \in rings$ \textbf{do}:&//phase 2 \\
    17:  & ~~ $matches = \mathbf{M}_{ring}.values()$ & ~ \\
    18:  & ~~ \textbf{for} $match\_set \in \mathbf{C}$ \textbf{do}: & ~ \\
    19:  & ~~ ~~ \textbf{for} $match \in matches$ \textbf{do}: & ~ \\
    20:  & ~~ ~~ ~~ $match\_set.append(match)$&//$\mathbf{C}$ in all rings\\
    21:  & $\mathbf{C'} = sec\_sum(\mathbf{C},[P_1,P_2,\cdots,P_p])$&//summation\\
    22:  & \textbf{for} $cs \in \mathbf{C'}$ \textbf{do}: & ~ \\
    23:  & ~~ $c = \mathbf{C'}[cs]$ & ~\\
    24:  & ~~ $Dice\_sim(cs) = \frac{p \times |\{\beta: c[\beta] = p, 1 \le \beta \le l\}|}{\sum_{\beta=1}^{l} c[\beta]}$&//Eq.~(\ref{eq:Dice_coefficient_cbf})\\ 
    25:  & ~~ \textbf{if} $Dice\_sim(cs) \ge s_t$ \textbf{then}:&//a match\\
    26:  & ~~ ~~ $\mathbf{M}.append(cs)$&//$\mathbf{M}$ in all rings\\ [1mm]
      \hline \\ [2mm]
  \end{tabular*}
\end{table}

\subsection{Ring by ring communication}
\label{subsec-rbr_comm}

In the absence of a trusted $LU$, as is required by the previously described \emph{SEQ}
communication approach,
we propose a ring by ring communication pattern (\emph{RBR})
for comparing CBFs from multiple parties without using a $LU$.
This method is illustrated in Figure~\ref{fig:cbf_rbr} for three rings with $r=3$
parties in each ring, while Algorithm~4 outlines the steps of
\emph{RBR} in detail. 
Similar to \emph{SEQ}, parties are grouped into rings 
using the $group(\cdot)$ function (line~1 in Algorithm~4).
The value for $r_m$ in \emph{RBR} should be
set to $r_m \ge 3$, as a minimum of three parties are required in each ring
to perform secure summation without a $LU$.

The \emph{RBR} method consists of two phases. 
In the first phase (lines~2-14 in Algorithm~4), the parties in each ring
(lines~2-4) individually perform
secure summation among them using $sec\_sum(\cdot)$ (line~9) on their sets of candidate records $\mathbf{C}_{ring}$
(generated in lines~5-8) to generate the CBFs $\mathbf{C'}_{ring}$, and calculate their
similarities to identify matches $\mathbf{M}_{ring}$ in each ring (as shown in lines~10-14).
In the second phase in lines~16-26, all parties 
then perform secure summation among them on the
matches identified in each ring $\mathbf{M}_{ring}$ in order to identify matches
$\mathbf{M}$ from
all $p$ parties.

Every ring in the first phase can employ a different set of BF
parameters to reduce the possibility of collusion between 
a set of parties in different rings. In the second phase,
all parties then have to agree on another
set of parameters for BF encodings of the matches identified in rings
in the first phase.
In addition, the rings in the first phase can be processed independently
and in parallel in a distributed environment making the \emph{RBR} more scalable
(than the \emph{SEQ}) with larger dataset sizes. 


\section{Analysis of the protocol}
\label{sec-analysis}

In this section we analyze our multi-party PPRL protocol
in terms of complexity, privacy, and linkage quality.

\subsection{Complexity analysis}
\label{subsec_comp_analsis}

We assume $p$ parties participate in the protocol, each having a
database of $n$ records. We assume a private blocking/indexing
technique employed in the blocking step forms
$b \le n$ blocks for each party.
In Step~1 of our protocol, the agreement of parameters has a constant
communication complexity.  
Blocking the databases in Step~2 has $O(n)$
computation complexity at each party, and
finding the intersection of blocks from all parties has a
communication complexity of $O(p\cdot b)$ and a computation complexity of
$O(b\cdot log~b)$ at each party, as $p\cdot b$ BKVs need to be securely 
communicated, and for each of the $b$ BKVs a search operation of
$O(log~b)$ is required in order to identify the intersection set. 
Assuming the average
number of $q$-grams in the QID attributes $A$ of each record is $g$, 
the masking of QID values of records into
BFs of length $l$ using $k$ hash functions for $n$
records in Step~3 is $O(n\cdot g\cdot k)$ at each party.

Steps~4 and~5 consist of the secure summation of
BF vectors to calculate the 
CBFs of candidate sets. 
Our extended secure summation protocols \emph{HSS}
and \emph{SSS} aim to improve privacy at the cost of
complexity overhead. 
The extended \emph{HSS} protocol requires 
$n \cdot l \cdot p$ encrypted values (long integers of 4 bytes each) to be
exchanged among the parties, while the basic secure summation
(\emph{BSS}) and \emph{SSS} require exchanging
$n \cdot l \cdot p$
short integer values (of 2 bytes each), 
which is more efficient compared to \emph{HSS}.
Further, the 
homomorphic encryption and decryption functions used
in \emph{HSS} are computationally expensive compared
to
simple vector addition and subtraction operations~\cite{Lin09}.

With the simplified assumption that all blocks are of equal size
$n/b$, i.e. contain $n/b$ BFs at each party, 
then using the
\emph{NAI} communication method 
in each block $(n/b)^p$ candidate sets of BFs (i.e.\ all
candidate sets of records in a block) 
have to be generated and their
CBFs calculated. 
The first party performs $b(n/b)$ summations, the second party $b(n/b)^2$,
and finally $b(n/b)^p$ summations are performed by the
last party, leading to a total of $O(\sum_{i=1}^p b(n/b)^i)$
summations in Step~4. 

In Step~5, either the linkage unit $LU$ or 
the first party that initiated the secure summation protocol
performs a vector subtraction operation
(subtracts the random vectors from the summed vectors) on all the
candidate sets, resulting in $b(n/b)^p$ CBFs.
The similarities of these CBFs are then calculated
and matches classified, which requires $O(b(n/b)^p)$ computations.
This combinatorial complexity currently limits the
\emph{NAI} linkage
to a small number of parties or a large number of small
blocks (i.e.\ $p$ or $n/b$ has to be small). 

The communication patterns \emph{SEQ} and \emph{RBR} proposed in 
Sections~\ref{subsec-seq_comm} and \ref{subsec-rbr_comm}, respectively, 
improve Steps~4 and~5 significantly (depending on the 
number of parties per ring, $r$) by reducing the computation 
and communication complexities.
In general the complexities are reduced from the
exponential growth with $p$ down to $r+1$ in \emph{SEQ} and $max(r,p/r)$ in \emph{RBR}.
With the simplified
assumption that each ring has $r$ parties, the computation and
communication complexities of the \emph{SEQ} and \emph{RBR} methods are
as follows.

In the \emph{SEQ} method, $b(n/b)^r$ candidate sets are processed in
the first ring and $b(n/b)^{r+1}$ in each of the remaining rings
(i.e. $b(n/b)$ matching
sets, in the worst case, from the previous ring are compared with
the $b(n/b)^r$ sets in each ring),
resulting in total computation and communication complexities of 
$O(\sum_{i=1}^r b(n/b)^{i}$ + $(\sum_{i=1}^{r+1} b(n/b)^{i} \cdot (p/r-1))$ (with $r < p$).
The
computation and communication complexities of the \emph{RBR} method is 
$O(\sum_{i=1}^r b(n/b)^{i} \cdot (p/r)$ + $\sum_{i=1}^{p/r} b(n/b)^{i})$, where the first phase requires
$b(n/b)^r$ total candidate sets to be processed
in each of the $p/r$ rings and the second phase compares
the $b(n/b)$ matching sets from each ring.

Overall, the complexity of the \emph{NAI} method is $O(b(n/b)^p)$,
the \emph{SEQ} method is $O(b(n/b)^{r+1} \cdot p/r)$, and the
\emph{RBR} method is $O(max(b(n/b)^r \cdot p/r,b(n/b)^{p/r})$.
This theoretical analysis shows that
the two proposed communication methods \emph{SEQ} and \emph{RBR} 
are computationally efficient
compared to the \emph{NAI} method. Depending on the values for
$p$ and $r$, the \emph{SEQ} and \emph{RBR} methods outperform
each other.

The memory size required for a CBF is $2^x$, which is $2^x \ge p$,
bits for every position in the CBF. If the length of CBF is $l$,
the total memory consumption is $l \times \lceil log_2(p) \rceil$. For $p$ BFs
the memory required is $1$ bit for every position in the BF, 
and therefore the total memory consumption of $1 \times l \times p$. 
CBF requires relatively more memory than using BFs when $p$ is small, 
however with increasing
number of $p$ in multi-party PPRL, a CBF requires significantly
lower memory compared to BFs.
For example, when $p=5$ and $l=1,000$ the respective
memory sizes of a CBF and corresponding BFs are 
$1000 \times \lceil log_2(5) \rceil = 3000$ bits
and $1 \times 1000 \times 5 = 5000$ bits, while for $p=10$ and $l=1,000$
they are $1000 \times \lceil log_2(10) \rceil = 4000$ bits
and $1 \times 1000 \times 10 = 10000$ bits, respectively.

\subsection{Privacy analysis}
\label{subsec_privacy_analsis}

As with most of the existing PPRL approaches,
we assume that all parties follow
the 
honest-but-curious 
(semi-honest) adversary model~\cite{Vat13}, 
where
the parties follow the protocol while being 
curious to learn the other parties' data 
by means of inference attacks on input data~\cite{Vat14}
or collusion~\cite{Vat13}.
To analyze the privacy against inference attacks, 
we discuss what the parties can learn through inference
attacks or collusion during the protocol.
We assume a trusted
$LU$ is available, which does not collude with any parties,
as is commonly used in many practical PPRL applications~\cite{Ran13}.

However, collusion among the database owners is a privacy risk
in the basic secure summation protocol where a set of parties
can collude to learn the BF of another party using their
received summation values. To overcome this problem, 
in Section~\ref{sec-ext-secsum} we have
proposed two extended secure summation protocols, \emph{HSS} and \emph{SSS}.
The \emph{HSS} protocol uses homomorphic encryptions for secure 
summation which makes the protocol more secure because without knowing the
private key (which is only known to the $LU$) identifying a party's
BF values by means of collusion will not be successful.
However, this protocol encrypts each integer value in a BF (in total
$l$ values for each BF)
into a hash key (long integers) and thus incurs 
a very large communication overhead making
the protocol not viable for linkage of multiple large databases.
The \emph{SSS} protocol similarly makes the protocol more secure 
by adding additional integer values as salting keys 
by each party individually (known only
to the $LU$) in the secure summation, such that without knowing
the salting key value collusion among parties to learn a party's BF 
is not possible. Compared to \emph{HSS}, the \emph{SSS} approach does not
incur any expensive communication overhead as the salting keys
are small integer values.

Communication occurs among the parties in Step~1 of our
protocol (as described in Section~\ref{sec:algorithm})
where they agree on parameter settings, and
in Steps~4 and~5
where they
participate in a secure summation protocol.
The agreement of parameter settings in Step~1 does not
reveal any sensitive information about the underlying data.
Secure summation involves partial (masked) data exchange
where the parties communicate the partial and full summations 
of their BFs (1) among them in Step~4
and (2) to the $LU$
or the first party that initiated the communication in Step~5, respectively,
to calculate the CBFs of the candidate sets and their similarities.

The $LU$ (in \emph{SEQ} or \emph{NAI}) or the first party 
in each ring (in \emph{RBR}) receives the CBFs
of candidate sets in each ring. Compared to calculating
similarities of sets using their BFs directly, using a CBF
makes the inference attack on individual BFs and thus
their $q$-grams (strings) mapped into them more difficult. 
An inference attack allows an adversary to map a list of known
values from a global dataset
(e.g.\ $q$-grams or attribute values from a public telephone directory) 
to the encoded values (BFs or CBF)
using background information (such as frequency)~\cite{Vat14,Kuz11}.
The only information that can be learned from such an inference attack
using a CBF $c$
of a set of $x$ BFs 
(summed over $x$ parties, where either $x=p$ in \emph{NAI} and \emph{RBR}
methods or $x = r$ in \emph{SEQ} and \emph{RBR} methods)
is if a bit position in $c$ is either $0$ or $x$
which means it is set to $0$ or $1$, respectively, in the BFs
from all $x$ parties.

\begin{prop}
The probability of identifying the original (unencoded)
values of $x$ ($x > 1$)
individual records $R_i$ (with $1 \le i \le x$) given a single CBF $c$
is smaller than the probability of
identifying the original (unencoded values) of $R_i$ given
$x$ individual BFs $b_i$, $1 \le i \le x$.
\begin{eqnarray}
\label{eq:pr_inf} 
\forall_{i=1}^{x} Pr(R_i|c) < Pr(R_i|b_i) \nonumber
\\
\end{eqnarray}
\end{prop}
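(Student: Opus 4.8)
The plan is to compare the information content of a single CBF $c$ against that of the $x$ individual BFs $b_1,\ldots,b_x$ from which it is formed, and to argue that any inference attack that succeeds given $c$ also succeeds given the $b_i$'s, but not conversely. The key observation, already established in the discussion preceding the Dice-coefficient theorem, is that $c$ is a deterministic function of $(b_1,\ldots,b_x)$ via $c[\beta]=\sum_{i=1}^x b_i[\beta]$, so $c$ is a (strict) coarsening of the joint tuple of BFs. First I would make precise what an adversary can extract from $c$ at each position $\beta$: the value $c[\beta]$ reveals the exact bit pattern of the $b_i$'s at position $\beta$ only in the two extreme cases $c[\beta]=0$ (all $b_i[\beta]=0$) and $c[\beta]=x$ (all $b_i[\beta]=1$); for any intermediate count $0<c[\beta]<x$ there are $\binom{x}{c[\beta]}\ge x>1$ distinct assignments of which parties contributed a $1$-bit, and $c$ does not distinguish among them. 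Hence $c$ identifies $b_i$ at position $\beta$ only at the fully-saturated or fully-empty positions, whereas $b_i$ itself reveals its own bit at \emph{every} position.

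Second, I would turn this into the probability inequality. Conditioning on $c$, the posterior over the possible originals of record $R_i$ is obtained by averaging the posterior $Pr(R_i\mid b_i)$ over all bit vectors $b_i$ consistent with $c$ (weighted by the prior on $q$-gram sets / the encoding). Because there is at least one position with an intermediate count (for $x>1$ and any BF that is neither all-zero nor all-one --- which holds with overwhelming probability for realistic QID values, since $q$-grams hash-map only a small fraction of the $l$ positions), the set of $b_i$ consistent with $c$ strictly contains more than one vector, so the conditional distribution $Pr(\cdot\mid c)$ is a non-trivial mixture. The maximum a-posteriori mass of this mixture is strictly smaller than the mass the adversary could place on the true original given the exact $b_i$, because mixing strictly spreads probability mass away from the single most-likely record that an exact $b_i$ would pin down. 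Formally, using $c=f(b_1,\ldots,b_x)$ and the data-processing principle, $Pr(R_i\mid c)=\sum_{b_i:\,f(\cdot)=c}Pr(b_i\mid c)\,Pr(R_i\mid b_i)\le \max_{b_i}Pr(R_i\mid b_i)$, with the inequality strict whenever the consistent set is non-degenerate. This yields $Pr(R_i\mid c)<Pr(R_i\mid b_i)$ for every $i$, which is exactly Eq.~(\ref{eq:pr_inf}).

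Third, I would package the multi-position argument cleanly: since the bits of a BF at distinct positions carry independent evidence about the underlying $q$-gram set (under the standard BF hashing model), the ambiguity introduced by even a single intermediate-count position multiplies through, so the gap between $Pr(R_i\mid c)$ and $Pr(R_i\mid b_i)$ grows with the number of non-saturated positions, and in particular the strict inequality is robust. I would also note that colluding parties holding a subset of the $b_i$ can subtract their known bits from $c$, effectively reducing $x$; this is why the statement is phrased for a non-colluding $LU$ receiving $c$ over all $x$ contributors, and why larger $r_m$ (hence larger $x$) strengthens privacy, consistent with the earlier remarks on the $r_m$ trade-off.

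The main obstacle is making the ``strict'' part rigorous without over-assuming: one must argue that with non-negligible probability at least one CBF position has an intermediate count $0<c[\beta]<x$, and that the prior over originals does not pathologically concentrate all its consistent-mass on a single $b_i$ (which would collapse the mixture to a point). Both follow from the standard assumptions already in use in the paper --- random, independent hash functions mapping each of the $g$ average $q$-grams to $k$ positions in $[1,l]$ with $g\cdot k \ll l$, so BFs are sparse and typically disagree in many positions across parties --- but spelling out that this makes the consistent set genuinely non-trivial (rather than hand-waving it) is the delicate step. Everything else is a direct application of the coarsening/data-processing observation together with the fact that averaging distinct posteriors strictly lowers the peak posterior mass.
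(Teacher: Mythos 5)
Your proposal is correct in spirit but takes a genuinely different route from the paper's proof. The paper argues by a direct worst-case count: it takes the probability of suspicion to be $Pr(R_i\mid b_i)=1/n_g$, where $n_g$ is the number of unencoded values an inference attack can map to the pattern $b_i$, sets $n_g=1$ in the worst case so that $Pr(R_i\mid b_i)=1$, and then observes that a CBF aggregates $x$ BFs and hence represents at least $x$ original values, giving $Pr(R_i\mid c)\le 1/x<1$. That is a two-line counting argument pegged to the paper's operational definition of disclosure risk; it buys brevity, but as written it only resolves the comparison in the worst case (if $n_g>x$ for the individual BF, comparing $1/n_g$ against $1/x$ is not settled by that argument). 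Your route --- treating $c$ as a deterministic coarsening of $(b_1,\dots,b_x)$ and showing that the posterior given $c$ is a mixture of the posteriors given the consistent $b_i$'s, so its peak mass cannot exceed and is strictly below $Pr(R_i\mid b_i)$ whenever the consistent set is non-degenerate --- is a more principled Bayesian/data-processing argument that yields the weak inequality unconditionally and does not depend on the $1/n_g$ convention. The price is exactly the delicate step you flag: strictness needs at least one intermediate-count position $0<c[\beta]<x$, which fails precisely when all $x$ BFs coincide (e.g., exact-duplicate records in a true-matching candidate set), a degenerate case the paper's proof also silently assumes away when it asserts that the CBF represents $x$ distinct originals. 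Neither argument therefore establishes the strict inequality in full generality, but yours makes the required non-degeneracy assumption explicit rather than implicit, and correctly identifies where the collusion/subtraction caveat enters.
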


\begin{proof}
Assume the number of original (unencoded) values that can be mapped to a masked BF pattern $b_i$ from an inference attack is $n_g$. $n_g = 1$ in the worst case, where a one-to-one mapping exists between the masked BF $b_i$ and the original unencoded value of $R_i$. The probability of identifying the original value given a BF in the worst case scenario is therefore $Pr(R_i|b_i) = 1/n_g = 1.0$.
However, a CBF represents $x$ BFs and thus at least (in the worst case) $x$ original (unencoded) values, which leads to a maximum of $Pr(R_i|c) = 1/x$ with $x>1$ (when $x=1$, $c \equiv b_1$). Hence, $\forall_{i=1}^{x} Pr(R_i|c) < Pr(R_i|b_i)$.
\end{proof}

We will empirically evaluate and compare the amount of privacy
provided by masking records into CBFs and BFs against an
inference attack in Section~\ref{sec-discussion}.
A larger number of parties in a ring $r$ 
(i.e.\ the larger the value for $x$) results in an
increase in the difficulty of an inference attack 
(a smaller probability of suspicion $1/x$)
by the adversary ($LU$ or the first party in each ring
for \emph{SEQ} and \emph{RBR}, respectively) at the cost of more
candidate set comparisons.

Further, using different hash encodings $h_1,\cdots,h_k$ by different
rings in our \emph{SEQ} and \emph{RBR} methods improves privacy 
compared to the \emph{NAI} method by reducing the possibilities of
collusions, as discussed in Section~\ref{sec-comm-patterns}.
Since the hash functions used by parties in ring~1, for example, are not
known to parties in other rings, a collusion between parties in other
rings and / or the $LU$ will not be successful in inferring the original
values of parties in ring~1. A careful grouping of parties is therefore
required to improve privacy in a multi-party setting (for example,
$LU$ randomly groups or changes the grouping 
for different blocks). More specifically,
when $p$ parties are involved in the linkage,
the maximum number of possible combinations for collusion in \emph{NAI}
method is 
$\sum_{i=1}^p (p-1)$, while in \emph{SEQ} and \emph{RBR} it is
$\sum_{i=1}^{p/r} \sum_{j=1}^{r} (r -1)$. For example, with $p=9$
parties the \emph{NAI} method has $72$ possibilities to collude
while grouping into equal sized rings ($r=3$) leads to only $18$ different
collusion possibilities.

The values for the number of hash functions used ($k$) and the length
of the BF ($l$) 
provide a trade-off between the linkage quality and
privacy~\cite{Sch11}. The higher the value for $k/l$, the higher the
privacy and the lower the quality of linkage, because the number of
$q$-grams mapped to a single bit 
(and therefore the number of resulting collisions) increases, which leads to lower
linkage quality but makes it more difficult for an adversary to learn
the possible $q$-gram combinations~\cite{Kuz11}. 
The CLK encoding method
(as discussed in Section~\ref{subsec-bf}) of hash-mapping several QID
values from each record into one compound BF~\cite{Sch11,Vat12}
makes it even more difficult for an adversary to learn individual
QID values that correspond to a revealed 
bit pattern in a BF.

\subsection{Linkage quality analysis}
\label{subsec_quality_analsis}

Our protocol supports approximate matching of QID values, in that
data errors and variations are taken into account depending on
the minimum similarity threshold $s_t$ used.
The quality of BF-based masking depends on the
BF parameterization~\cite{Sch11,Vat14b}.
For a given BF length, $l$, and the number of elements $g$
(e.g. $q$-grams) to be
inserted into the BF, the optimal number of hash
functions, $k$, that minimizes the false positive rate $f$
(of a collision of two different $q$-grams being mapped to the
same bit position), is
calculated as~\cite{Mit05}
$k = l/g~ln(2)$,
leading to a false positive rate of
$f =  (1/2^{ln(2)})^{l/g}$.

While $k$ and $l$ determine the computational aspects of BF masking,
linkage quality and privacy will be determined by the false positive
rate $f$. A higher value for $f$ will mean a larger number of false
matches and thus lower linkage quality~\cite{Mit05,Sch11}. 
In our experimental evaluation we will set the BF
parameters for our approach according to the discussion presented here
and following earlier BF work in PPRL~\cite{Dur13,Sch11,Vat12,Vat14b}.


\section{Experimental evaluation}
\label{sec-experiment}

In this section, we empirically evaluate 
the performance of
our protocol (which we refer as \emph{AM-CBF} for
Approximate Matching with Counting BFs) 
with the \emph{SEQ}, \emph{RBR}, and \emph{NAI}
communication patterns
in terms of the three properties of PPRL,
scalability (complexity), linkage quality, and
privacy. 
We describe the competing baseline methods 
in Section~\ref{sec-competing-methods},
the datasets used in Section~\ref{sec-datasets},
the evaluation measures in Section~\ref{sec-measures},
and the experimental settings in Section~\ref{sec-settings}.
We then discuss the experimental
results in Section~\ref{sec-discussion}.

\subsection{Baseline methods}
\label{sec-competing-methods}

We use Lai et al.~\cite{Lai06}'s exact matching 
BF-based PPRL
approach (referred as \emph{EM-BF} for Exact Matching
with BFs) and Vatsalan and Christen~\cite{Vat14c,Vat16c}'s
approximate matching BF-based PPRL approach 
(\emph{AM-BF} for Approximate Matching with BFs) as competing
baseline methods to compare with our proposed approach.
Since other existing approaches for multi-party PPRL
(as reviewed in Section~\ref{sec-related})
are either based on expensive cryptographic techniques
or applicable to
categorical data only, we do not compare them with
our approach.

Lai et al.'s \emph{EM-BF} approach~\cite{Lai06} performs exact matching
of QIDs across multiple parties using BFs.
In their approach, the QID values of all records in a dataset
are first converted into one BF. 
Each party then partitions its BF into segments
according to the number of parties involved in the linkage, and sends
these segments to the corresponding other parties. The segments received by a party
are combined using a conjunction (logical AND) operation. The
resulting conjuncted BF segments are then exchanged between
the parties to construct the full conjuncted BF. 
Each party checks its own full BF of each record with the
conjuncted BF, and if the membership test is successful then the record
is considered to be a match. Though the cost of this approach is low
since the computation is completely distributed between the parties
and the processing of BFs is fast, the
approach can only perform exact matching. 

Vatsalan and Christen proposed \emph{AM-BF}~\cite{Vat14c,Vat16c} by
adapting the idea used in \emph{EM-BF}
of distributively computing
the conjunction of a set of BFs from multiple parties
to perform privacy-preserving approximate matching for multi-party PPRL.
Once the conjuncted BF segments are computed by the respective
parties, a secure summation protocol is initiated among the parties
to securely sum the number of common $1$-bits in the conjuncted 
BF segments as well as the total number of $1$-bits
in each party's BF. These two sums are then used to calculate
the Dice coefficient similarity of the set of BFs.
A filtering approach is employed to reduce the number of comparisons
based on segment similarity, such that if a sub-set of BF
segments of a candidate set (as calculated by a respective party)
has lower similarity then the 
BFs do not have to be compared with any of the BFs
from the other parties.

Both \emph{EM-BF} and \emph{AM-BF} approaches use the
\emph{NAI} method for comparing and classifying candidate sets
of records.

\subsection{Datasets}
\label{sec-datasets}

To provide a realistic evaluation of our approach, we based all our
experiments on a large real-world database, the North Carolina Voter
Registration (NCVR) database as available from
\texttt{ftp://alt.ncsbe.gov/data/}. This database has been used for the
evaluation of various other PPRL approaches~\cite{Dur13,Ran14,Ran15,Vat14c,Vat16c,Vat14}. We have
downloaded this database every second month since October 2011 and
built a combined temporal dataset that contains over 8 million records
of voters' names and addresses~\cite{Chr13NC}.
We are not aware of any available real-world dataset that contains
records from more than two parties that would allow us to evaluate 
multi-party PPRL approaches. We therefore generated, based on the real NCVR
database, a series of sub-sets for multiple parties, 
as will be described next.

To allow the evaluation of our approach with different number of
parties, with different dataset sizes, and with data of different
quality, we used and modified a recently proposed data
corruptor~\cite{Chr13b,Tra13} to generate various datasets with different
characteristics based on randomly selected records 
from the NCVR
database. 
The identifiers of the
selected and modified records were kept the same, 
which allows us to identify true and
false matches and therefore evaluate linkage quality, 
as described below.
Specifically, we extracted sub-sets of $5,000$, $10,000$,
$50,000$, $100,000$, $500,000$, and $1,000,000$ records from the NCVR
to generate datasets for $3$,
$5$, $7$, and $10$
parties, where
the number of matching records is set to 50\% (i.e.\ half of 
selected records occur in the datasets of all parties).

To evaluate how the approaches work with `dirty' data, 
we created several series of datasets for each of the
datasets generated above, 
where we included a varying number
of corrupted records into the sets of overlapping records ($0\%$,
$20\%$, and $40\%$). 
We applied various corruption
functions on randomly selected attribute values, 
including character edit operations (insertions, deletions,
substitutions, and transpositions), and optical character recognition
and phonetic modifications based on look-up tables and corruption
rules~\cite{Chr13b}.
This means that a certain
percentage of records in the overlap were modified for randomly
selected parties, while the original values were kept for the
other parties. Therefore,
some of these records are exact duplicates across some parties in a
set, but are only approximately matching duplicates across the other
parties in the set. This simulates, for example, the situation where
three out of five hospitals have the correct and complete contact
details (like name and address) of a certain patient, while in the
fourth and fifth hospitals some of the details of the same patient are
different.

\subsection{Evaluation measures}
\label{sec-measures}

We evaluate the three properties of PPRL 
using the
following evaluation measures.
The complexity (scalability) of linkage is measured by \emph{runtime}, 
\emph{communication size}, and the
\emph{number of comparisons} required for the linkage. 
The quality of the achieved linkage is measured using the 
\emph{F-measure}, calculated on classified matches and non-matches,
that has widely been used in record linkage,
information retrieval and data mining~\cite{Chr12}.

In line with other work in PPRL~\cite{Ran14,Vat14c,Vat16c,Vat14}, we
evaluate privacy using disclosure risk (DR) measures based on the
probability of suspicion, i.e.\ the likelihood a masked (encoded) database
record in $\mathbf{D}^M$ can be matched with one or several (masked) record(s) in a
publicly available global database $\mathbf{G}$. 
The probability of suspicion for a masked value/record $R^M$, $Pr(R^M)$,
is calculated as $1/n_g$ where
$n_g$ is the number of possible matches in $\mathbf{G}^M$
to the masked value $R^M$.
We conducted a frequency linkage attack~\cite{Vat14} 
by mapping the exchanged bit information in the BFs
generated from $\mathbf{D}^M$
to the BFs generated from $\mathbf{G}^M$.
We used the worst case scenario where $\mathbf{G} \equiv \mathbf{D}$,
because when
$\mathbf{G} \equiv \mathbf{D}$ there will be a one-to-one exact matching
of a global value for each value in $\mathbf{D}$.
Based on such linkage attack, we
calculate the
following disclosure risk measures, as proposed by Vatsalan et al.~\cite{Vat14}.

\begin{itemize}
\item \emph{Mean disclosure risk} ($DR_{Mean}$): This calculates the average risk 
      ($\sum_i^{|\mathbf{D}^M|}$ $Pr(R_i^M)/|\mathbf{D}^M|$) of any sensitive value in $\mathbf{D}^M$ being re-identified.

\item \emph{Marketer disclosure risk} ($DR_{Mark}$): This is calculated as the proportion of masked records 
in $\mathbf{D}^M$ that
match to exactly one masked record in $\mathbf{G}^M$ ($|\{R^M \in \mathbf{D}^M: Pr(R^M) = 1.0\}| /|\mathbf{D}^M|$).
\end{itemize}

\subsection{Experimental settings}
\label{sec-settings}

We implemented both our proposed approach and the competing 
baseline approaches in Python 2.7.3,
and ran all experiments on a server with 
four 6-core 64-bit Intel Xeon
2.4 GHz CPUs, 128 GBytes of memory and running Ubuntu 14.04. The
programs and test datasets are available from the authors. 

Following the discussion in Section~\ref{subsec_quality_analsis}
and other work in PPRL~\cite{Dur13,Sch09,Vat12}, 
we set the parameters as
BF length $l=500$, the number of hash functions
$k=20$, the length of grams $q=2$, the minimum similarity
threshold $s_t=0.8$, and the number of parties $p=[3,5,7,10]$. 
For the \emph{SEQ} method the minimum number of parties per
ring was set to $r_m = 2$ and
for the \emph{RBR} method $r_m=3$.
The attributes first name, last name, city, and
zipcode were used as the QIDs in the linkage.
We applied a
Soundex-based phonetic blocking~\cite{Chr12} 
for the private blocking step in all approaches,
which results in a set of blocks on which we individually conduct 
private comparison and classification.

The experiments are two-fold.
In the first part we evaluate the scalability of our approach
with proposed communication patterns and compare with the \emph{NAI} method.
In the second part we compare the complexity, linkage quality, and privacy
of our approach (with the \emph{SEQ} approach as it gave the best results
in the first part) with baseline methods.
We used the first name and last name attributes as the blocking keys
in the second set of experiments and all four attributes 
in the first set of scalability experiments in order to
allow for comparative evaluation with the \emph{NAI} method
(as we were unable to run experiments for the \emph{NAI} method on larger datasets
with larger $p$ when only two blocking attributes were used).

\begin{table}[t!]
\centering
 \begin{tabular}{c|ccc} \hline
 \\[-2mm]
  ~ & ~Runtime (sec)~ & ~Size (MBytes)~ & ~F-measure~ \\ 
  \hline
  \\[-2mm]
  \emph{BSS} & $34.03$ & $47.11$ & $1.0$ \\
  \emph{HSS} & $60557.93$ & $1257.06$ & $1.0$ \\
  \emph{SSS} & $34.03$ & $47.15$ & $1.0$ \\
  \hline
  \end{tabular}
 \caption{Results of secure summation protocols ($p=3$, $n=5000$).}
\label{sec_sum_results}
\end{table}

\subsection{Discussion}
\label{sec-discussion}

Table~\ref{sec_sum_results} shows the runtime and memory size
required and the F-measure results
achieved with the three secure summation protocols.
As can be seen, the \emph{HSS} requires significantly higher
runtime and memory (which is not practical in real applications)
to improve privacy against collusion attacks on the \emph{BSS}
without compromising the F-measure results. 
However, the
\emph{SSS} approach
requires similar runtime and memory as the \emph{BSS}
for improving privacy against collusions with no loss
in linkage quality.
We therefore use \emph{SSS} in all following experiments.

\begin{figure*}[!th]
  \centering
 \includegraphics[width=0.32\textwidth]{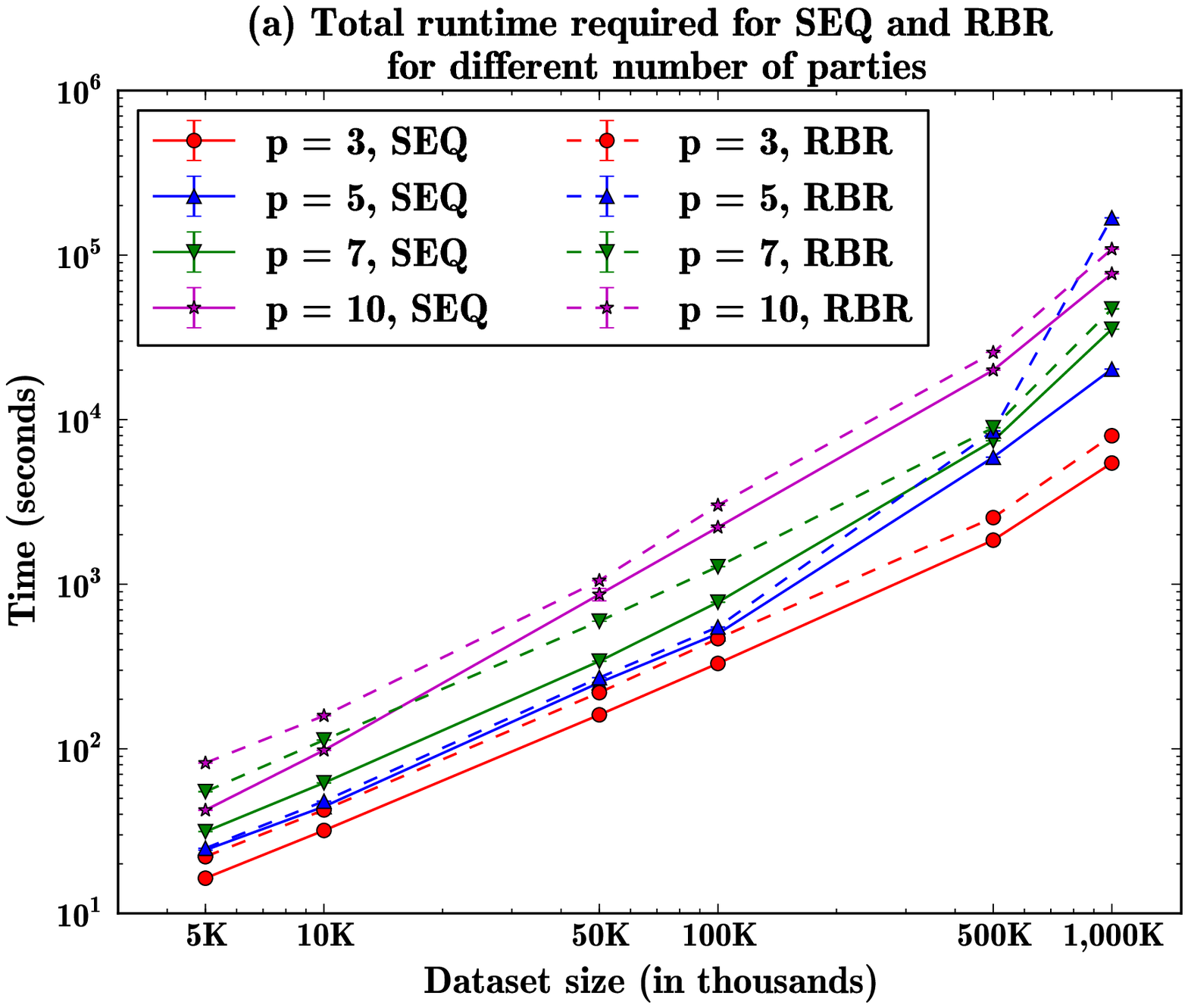}
 ~
 \includegraphics[width=0.32\textwidth]{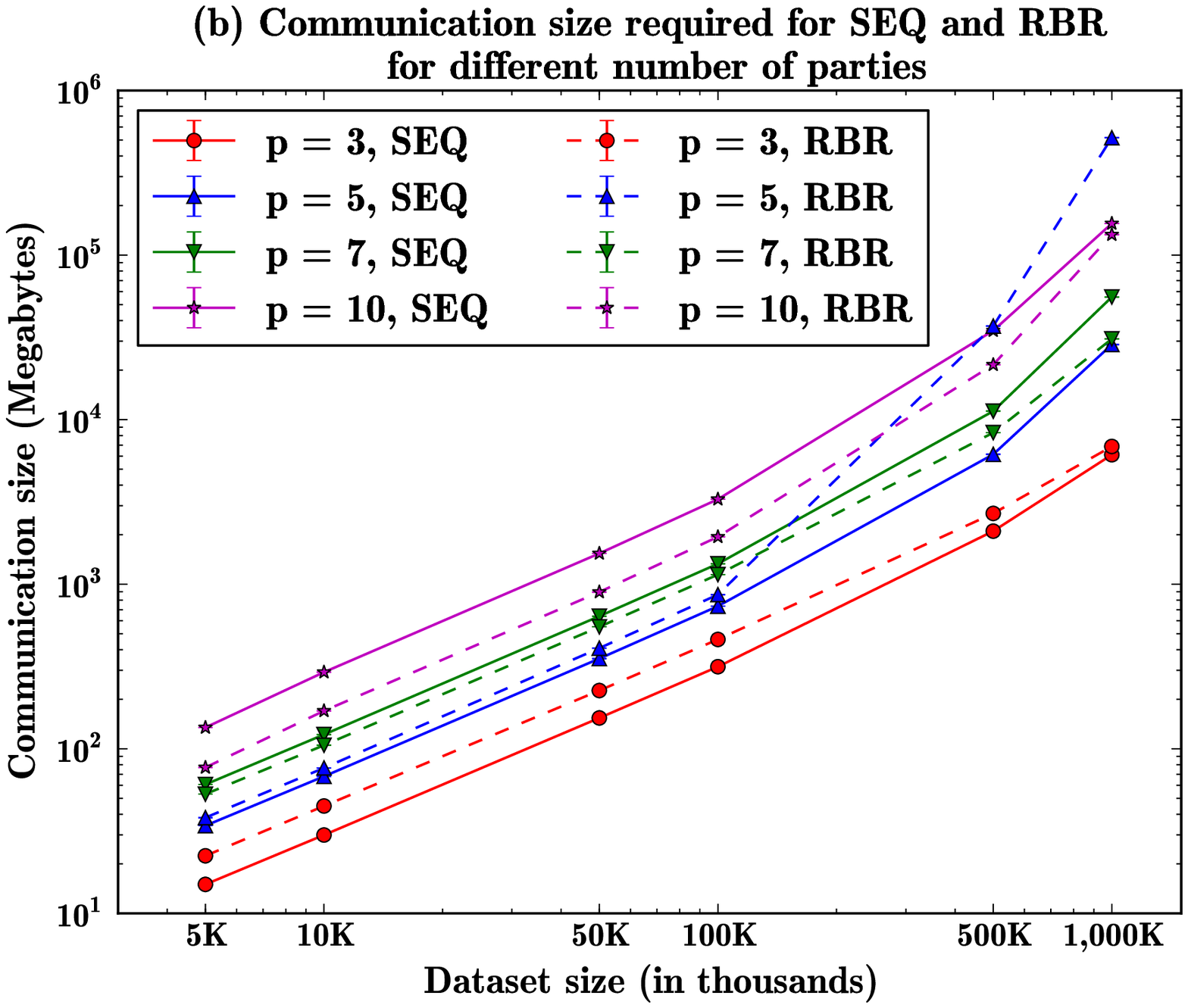}
~
 \includegraphics[width=0.32\textwidth]{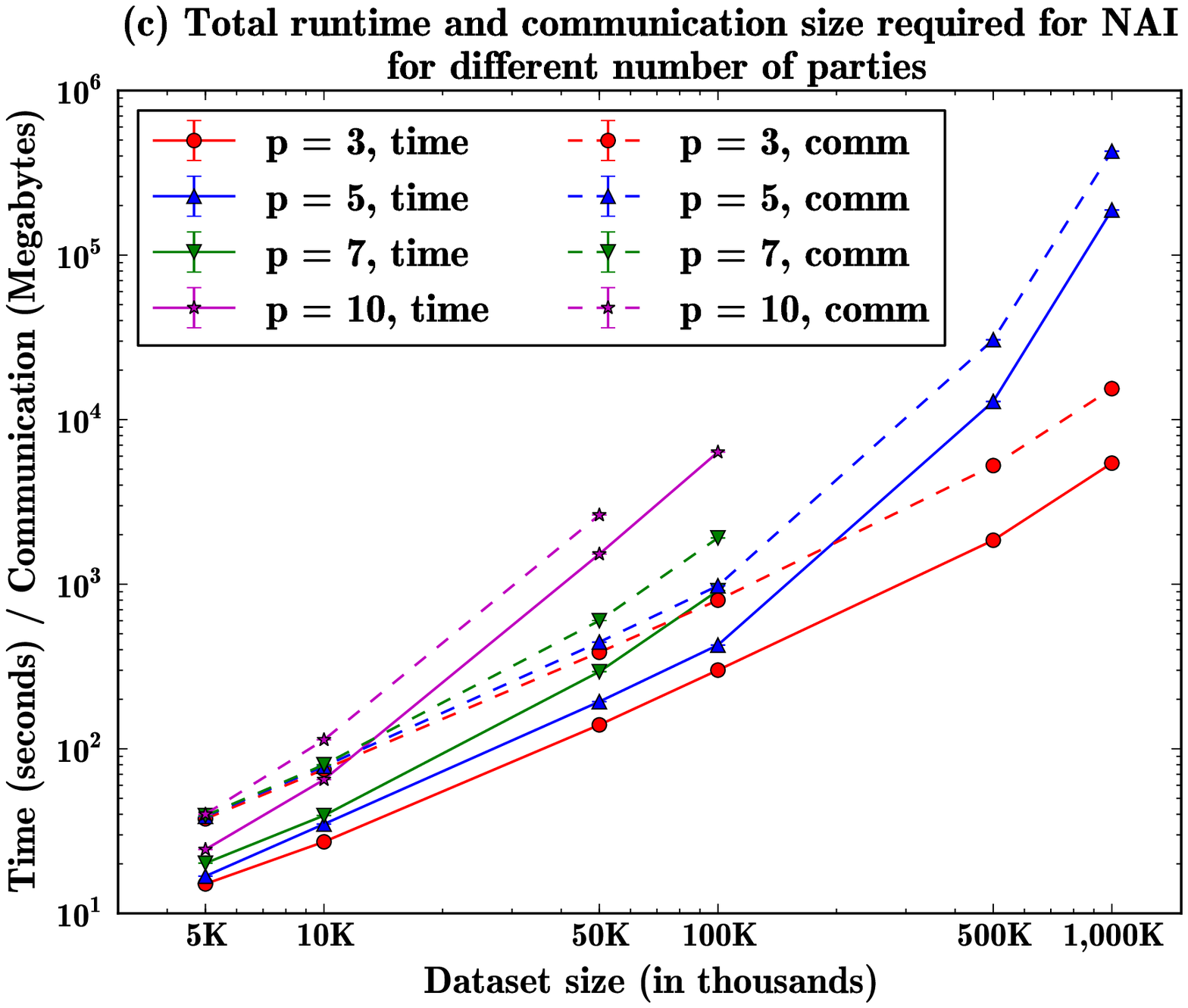}
~
 \includegraphics[width=0.32\textwidth]{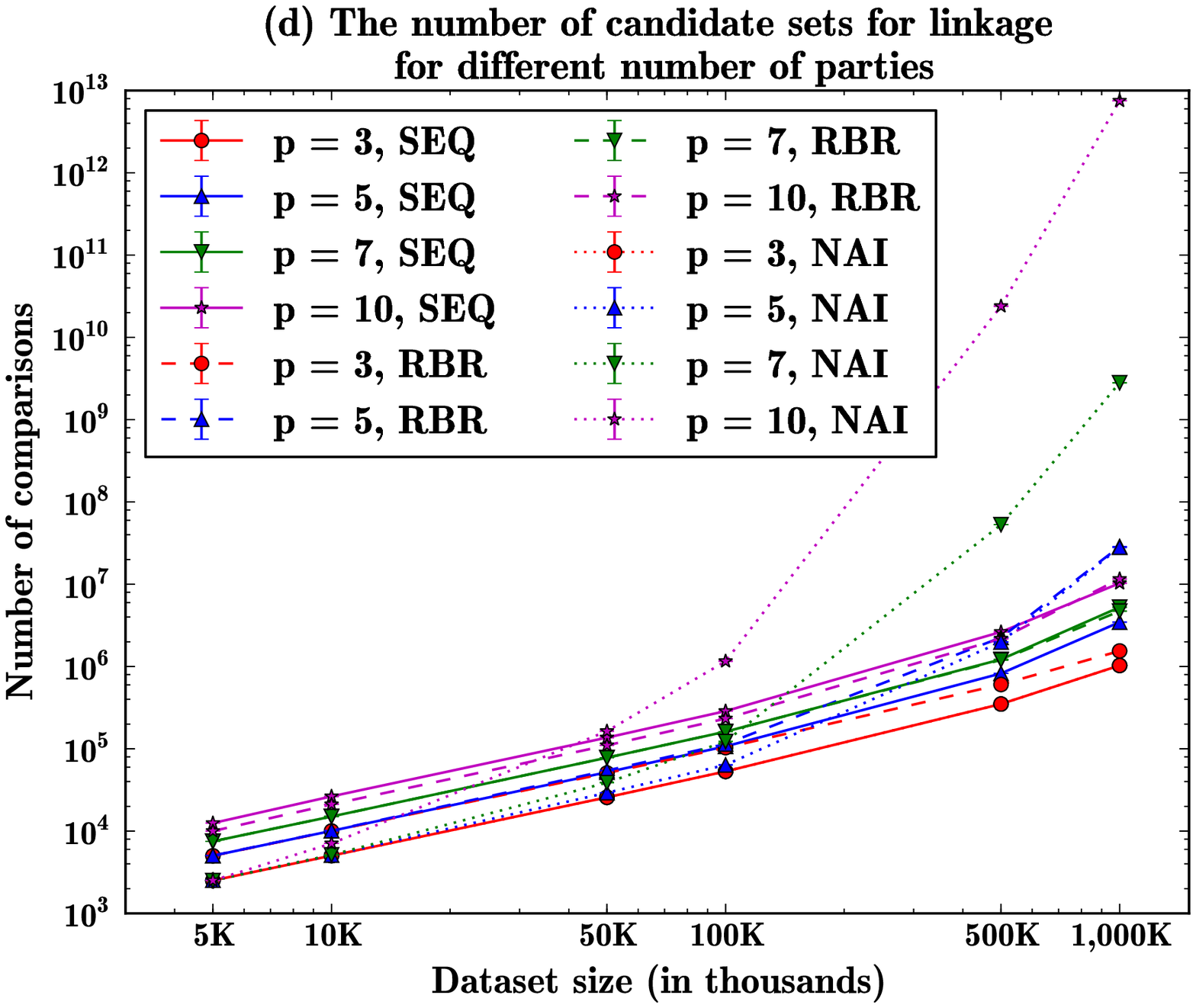}
~
 \includegraphics[width=0.32\textwidth]{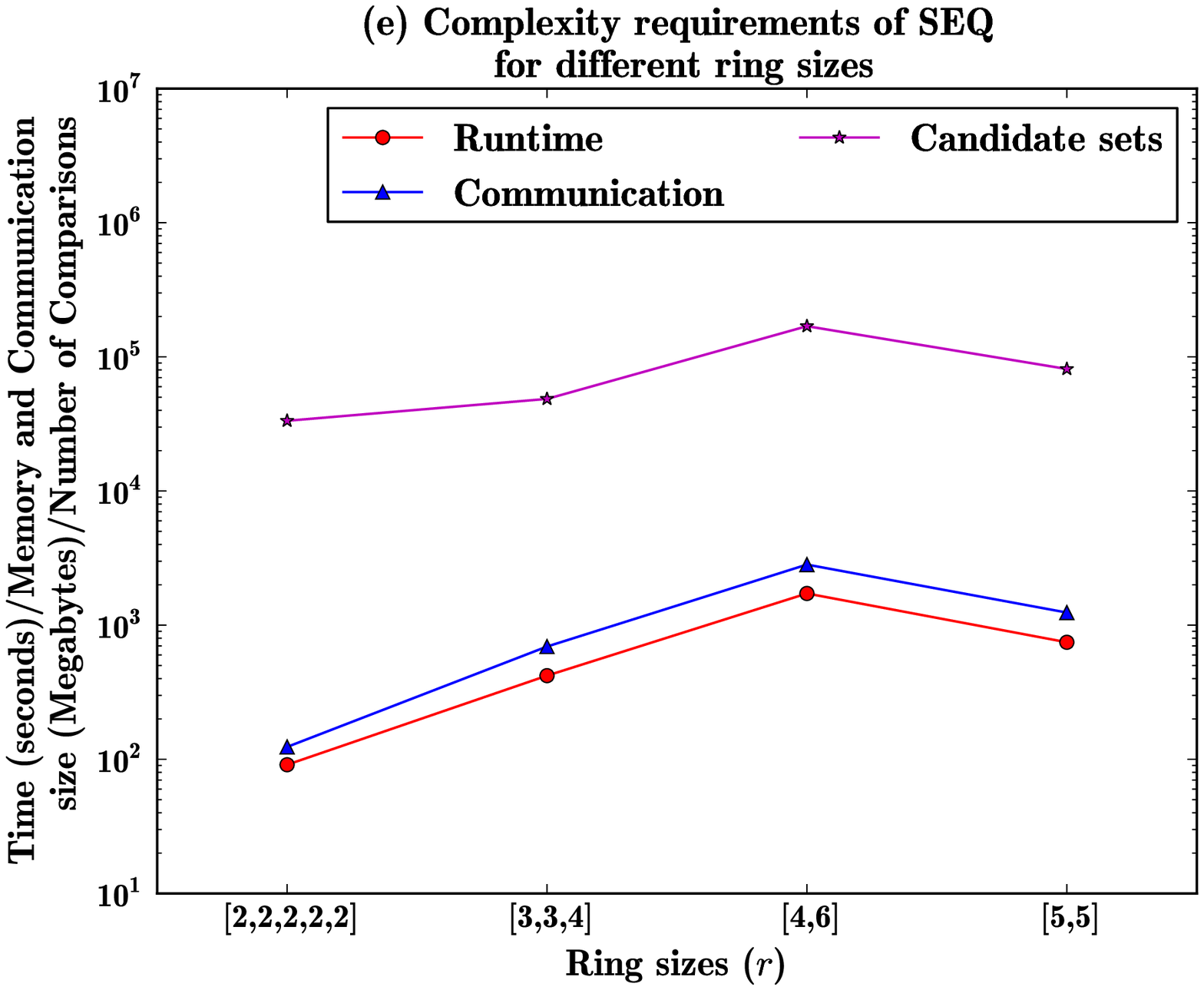}
~
 \includegraphics[width=0.32\textwidth]{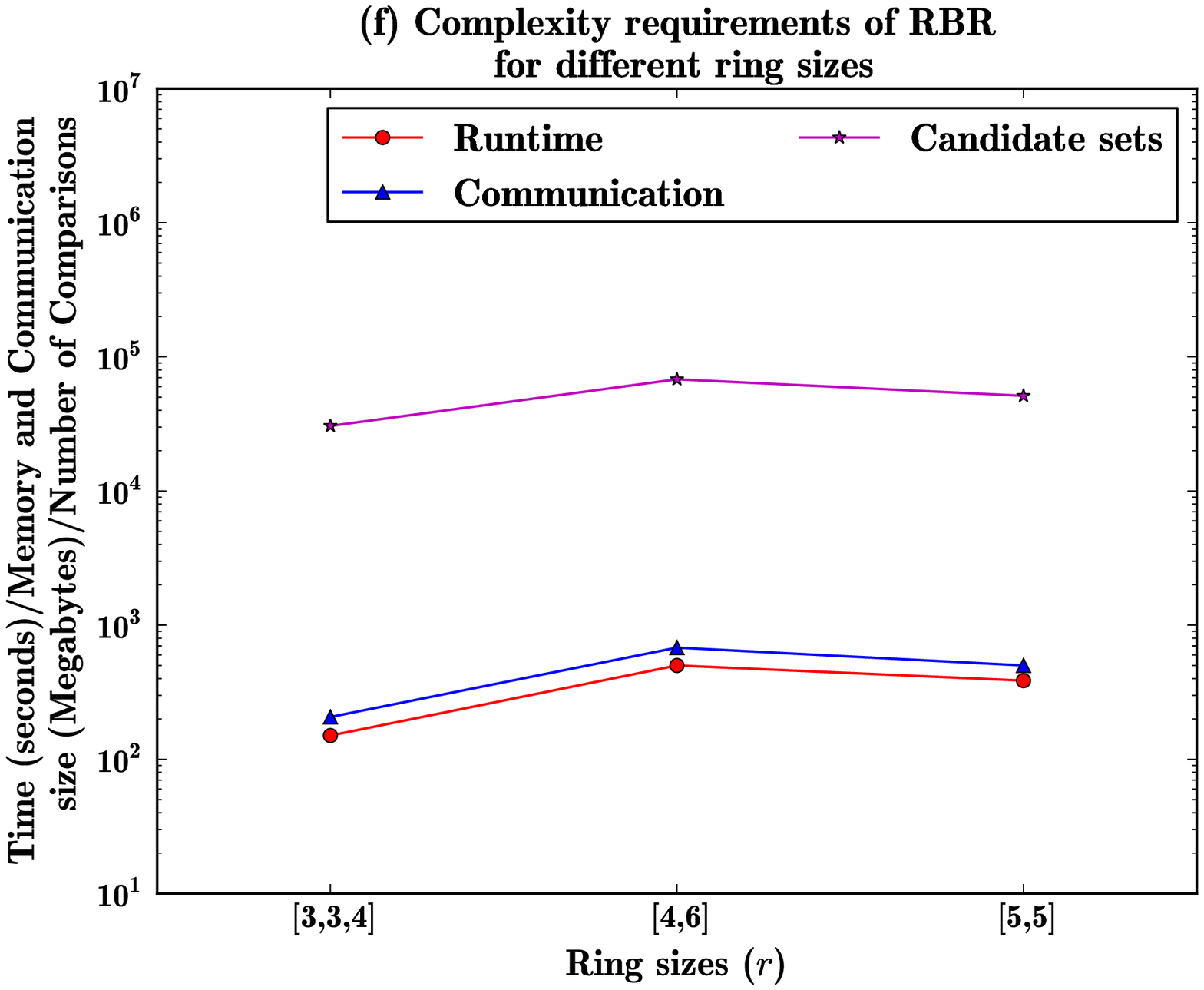}
 
  \caption{\small{(a) Total runtime and (b) communication size
    required for the \emph{SEQ} and \emph{RBR} and (c) \emph{NAI} methods,
    (d) the
    number of candidate sets to be compared for the linkage
    on different dataset sizes; and complexity requirements
    of (e) \emph{SEQ} and (f) \emph{RBR} for different ring sizes on the
    NCVR-10,000 datasets for $p=10$ parties. Note both axis of (a) - (d) 
    and the y-axis of (e) and (f) are in
    log scale. 
   }
    }
\label{fig:scal}
\end{figure*}

\begin{figure*}[!th]
  \centering
 \includegraphics[width=0.32\textwidth]{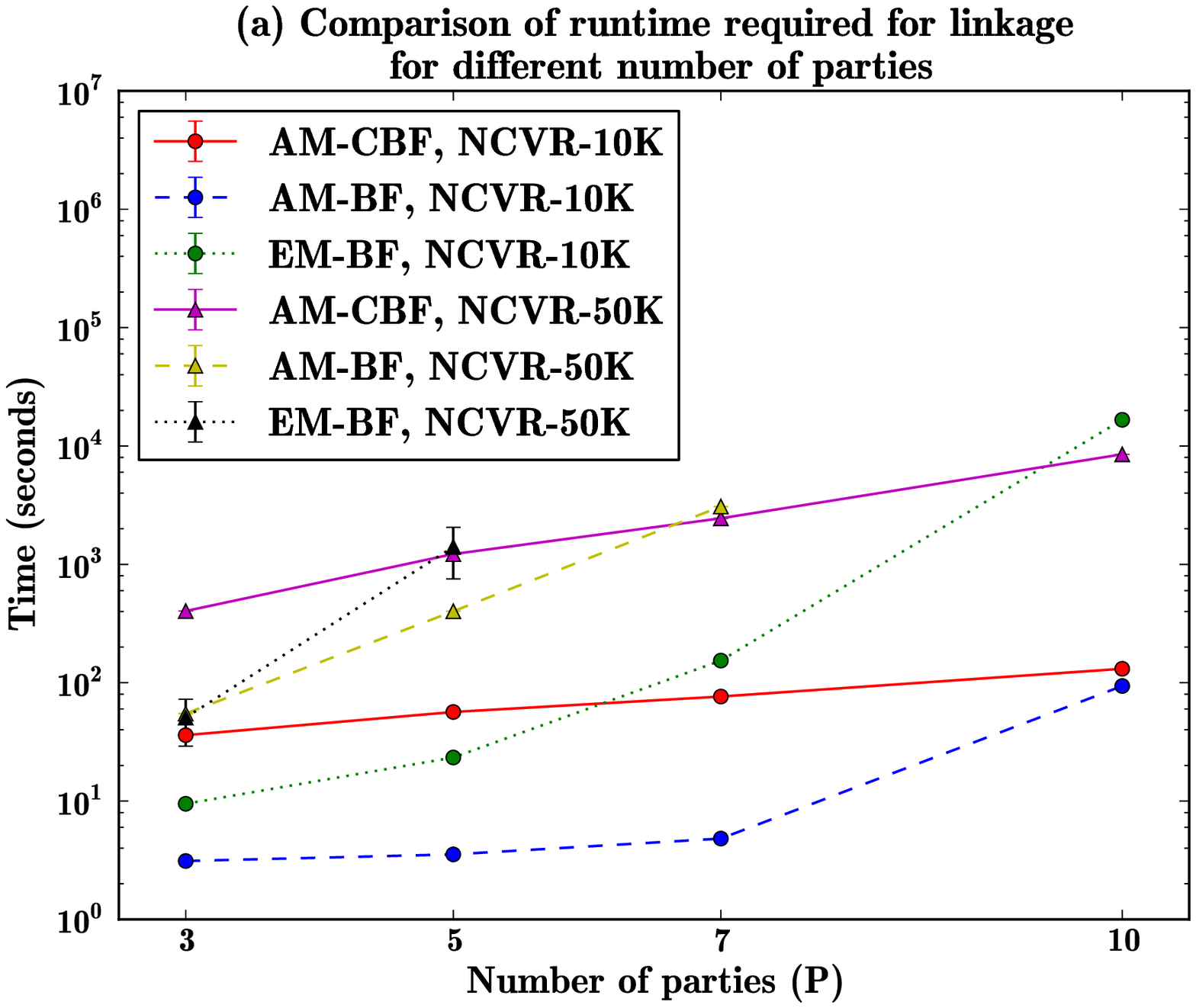}
~
 \includegraphics[width=0.32\textwidth]{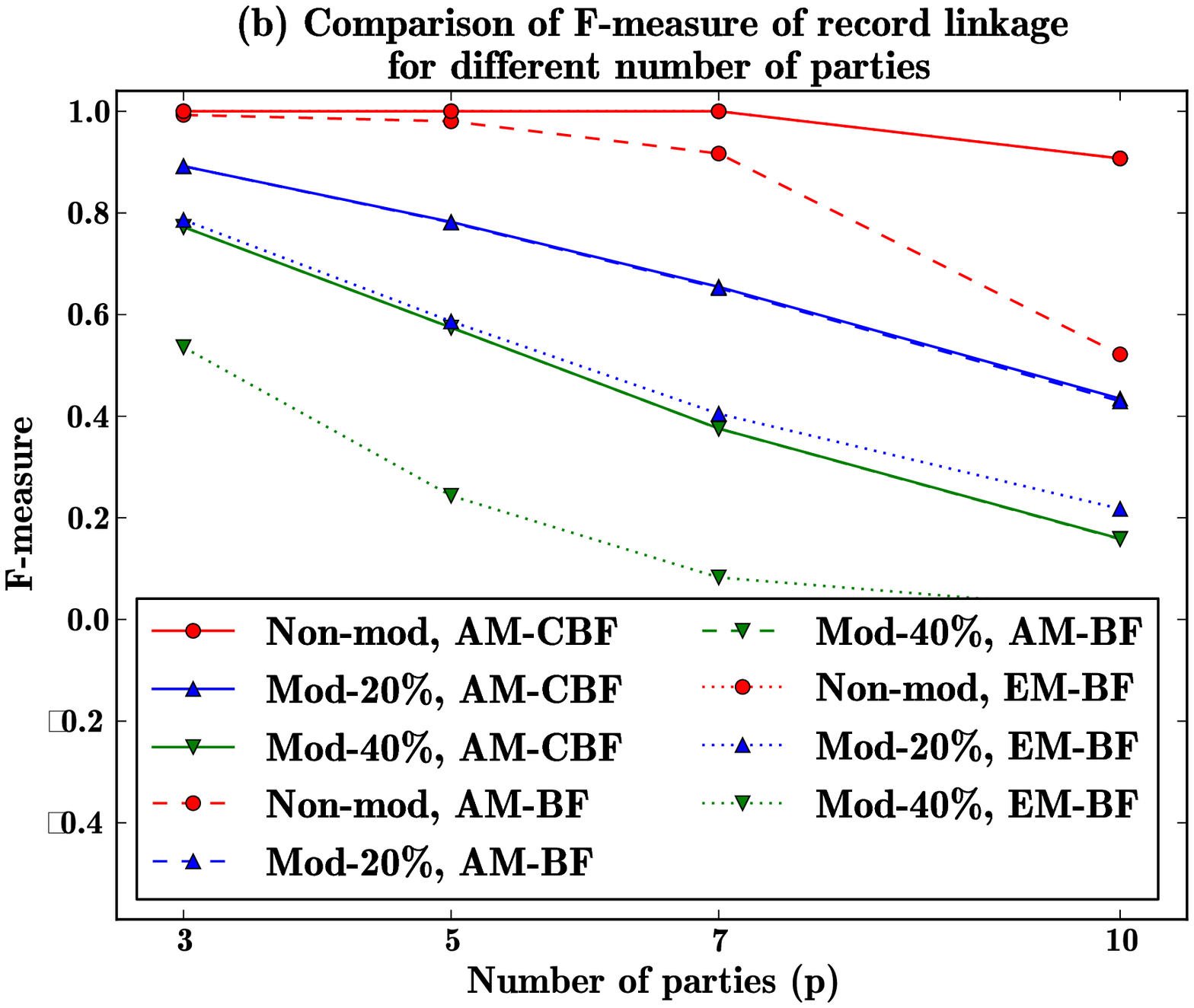}
~
 \includegraphics[width=0.32\textwidth]{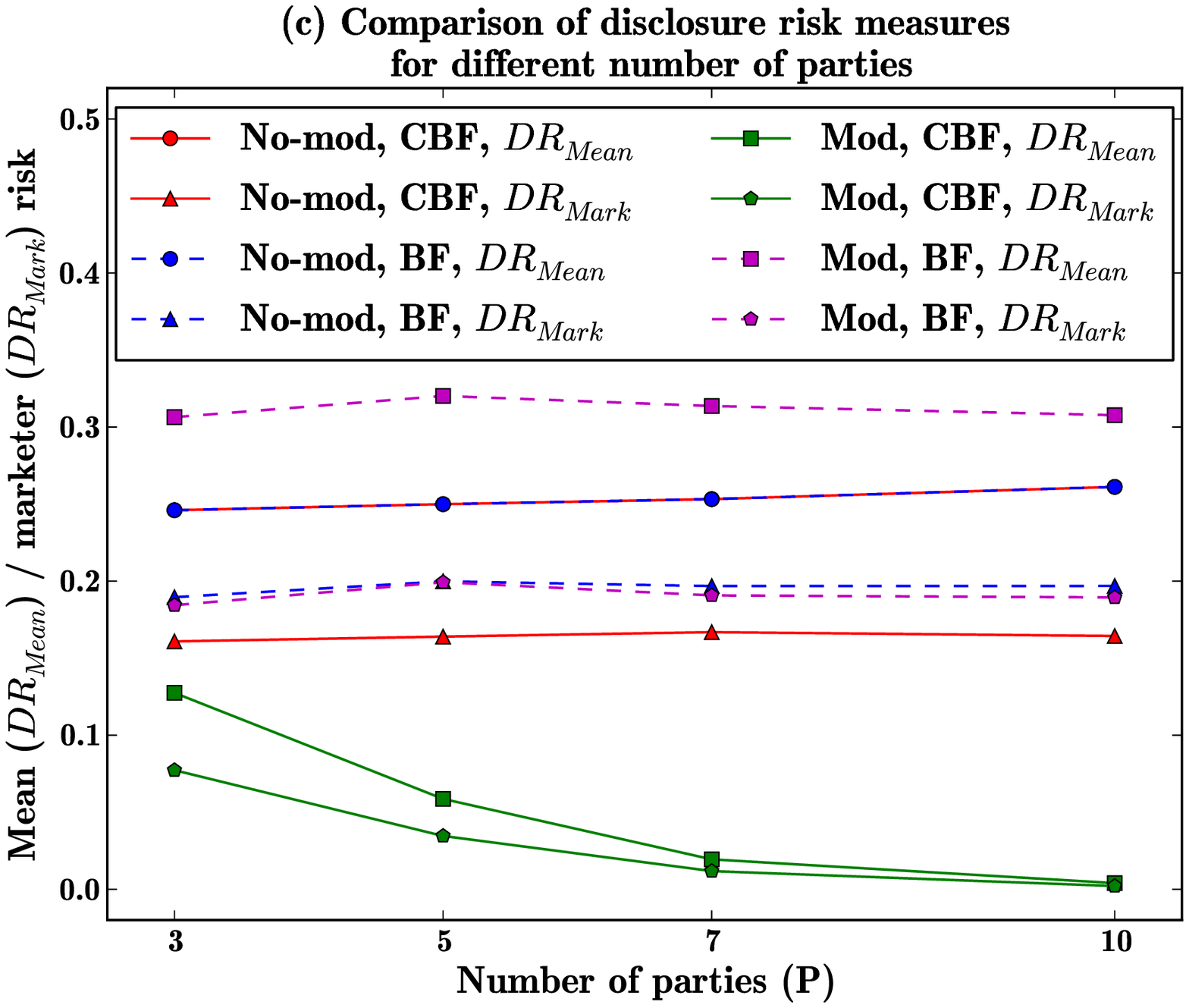}
 
  \caption{\small{Comparison of
  (a) runtime on the NCVR-10,000 and NCVR-50,000 datasets (missing data points are discussed in Section~\ref{sec-discussion}), 
  (b) F-measure of linkage, and (c) privacy results measured by mean disclosure risk ($DR_{Mean}$) and 
  marketer disclosure risk ($DR_{Mark}$) on the NCVR-10,000 datasets with competing baseline approaches.}
    }
\label{fig:comp}
\end{figure*}

Figures~\ref{fig:scal} (a) and (b)
show the complexity of our \emph{AM-CBF} approach
with the proposed communication patterns \emph{SEQ} and \emph{RBR}
in terms of runtime and communication size (comm), while
Figure~\ref{fig:scal} (c) shows the runtime and communication size
required by the \emph{NAI} communication pattern.
As the figures show the proposed communication methods improve the scalability
of our multi-party approach
by significantly reducing the number of candidate sets 
(and thereby the complexities)
compared to the \emph{NAI} method 
(which has been used in many existing solutions including
the baseline approaches).
As discussed in Section~\ref{subsec_comp_analsis},
\emph{SEQ} is faster than \emph{RBR} (Figure~\ref{fig:scal} (a)).
When $p=5$, \emph{RBR} is equivalent to \emph{NAI} (as we set $r_m=3$).
Both \emph{SEQ} and \emph{RBR}
are scalable (almost linear) with the
dataset size and the number of parties compared to the \emph{NAI}
method that increases
exponentially with the number of datasets and their sizes.
We were unable to conduct linkage experiments using the \emph{NAI} method for
larger $p$ and dataset sizes due to its exponential complexity
(and thus some data points for the \emph{NAI} method are missing
in Figure~\ref{fig:scal} (c)).

Figure~\ref{fig:scal} (d) shows the number of candidate
sets to be compared and classified using our \emph{AM-CBF} approach.
The number of candidate sets
with the \emph{NAI} method grows
exponentially with the size of the datasets for increasing $p$, as can be
seen when $p=10$.
In Figures~\ref{fig:scal} (e) and (f), we compare the 
complexity requirements of the \emph{SEQ} and \emph{RBR}
methods with different ring sizes based on $r_m$
on the NCVR-10,000 datasets for $p=10$ parties.
We evaluated the \emph{SEQ} method with
$r_m = [2,3,4,5]$ and the \emph{RBR} method
with $r_m=[3,4,5]$ ($r_m \ge 3$
in \emph{RBR}). 
This provides ring sizes of [2,2,2,2,2], [3,3,4], [4,6], and [5,5]
for $r_m=2$, $3$, $4$, and $5$, respectively.
As shown in the figures,
the complexity increases with larger ring sizes,
because larger ring sizes mean more comparisons are required
in each ring. 

We compare the runtime required by our approach (with \emph{SEQ})
with the baseline approaches 
on the NCVR-10,000 and 50,000 datasets in Figure~\ref{fig:comp} (a).
Though the baseline approaches 
require lower runtime for linking $p \le 5$ 
datasets than our approach (because the matches
identified in one ring need to be compared again
with the candidate sets in the other rings in our approach), 
they are not scalable to more parties ($p > 5$) and larger datasets.
We were unable to conduct experiments for \emph{AM-BF}
with $p > 7$ and \emph{EM-BF} with $p > 5$ on the NCVR-50,000
datasets due to the exponential increase in the number of candidate sets
generated.
As can be seen in Figure~\ref{fig:comp} (a), 
our approach is scalable and requires significantly lower runtime 
for linking $p=7$ and $10$ datasets compared to the baseline
approaches.

As shown in Figure~\ref{fig:comp} (b), our approach outperforms
the baseline \emph{EM-BF} and \emph{AM-BF}
approaches in terms of linkage quality measured by F-measure.
The filtering approach used in \emph{AM-BF} results in lower
F-measure compared to our approach.
The F-measure is high on the non-modified datasets (0\% corruption),
while it drops with $p$ on the modified (20\% and 40\%)
datasets as the number of missed true matches
increases when records are modified in each dataset.
However, our \emph{AM-CBF} approach achieves the highest 
F-measure on both modified and non-modified datasets
(though \emph{EM-BF} 
performs equally well 
on the non-modified
datasets).

Finally, we compared the DR measures
of privacy for our approach based on
CBF masking with 
baseline approaches based on
BF masking in Figures~\ref{fig:comp} (c) and (d).
The results show that the DR measures 
with our approach 
are consistently lower (and thus privacy
is higher) than the baseline
approaches. 
As expected, the DR measures also decrease with 
larger $p$ (as discussed in Section~\ref{subsec_privacy_analsis}),  
and with more corruptions in the dataset because corrupted
records reduce the probability of mapping to matching
global values in $\mathbf{G}$ to allow the records
to be re-identified.


\section{Related Work}
\label{sec-related}

Various techniques have been proposed in the literature tackling the
problem of PPRL~\cite{Vat13}. 
However, most of them consider linking two databases only, and
a only few approaches have been proposed for PPRL
on multiple databases.

An SMC-based approach using an oblivious
transfer protocol was proposed
by O'Keefe et al.~\cite{Kee04} 
for PPRL on multiple
databases. 
While provably secure, the approach only performs \emph{exact matching}
of masked values (i.e.\ variations and errors in the QIDs
are not considered) and it is \emph{computationally
expensive} compared to efficient perturbation-based privacy techniques,
such as BFs and $k$-anonymity~\cite{Vat13}. 
A multi-party $k$-anonymity-based PPRL approach
was introduced by
Kantarcioglu et al.~\cite{Kan08}.
In their apporach, a secure equi-join (\emph{exact matching}) 
is applied on the $k$-anonymized
databases by a $LU$ to identify matching records.

A multi-party PPRL approach for approximate matching 
of \emph{categorical values} 
was proposed by Mohammed et al.~\cite{Moh11}, where a top-down generalization is
performed to provide $k$-anonymous privacy
and the generalized blocks are then classified into matches and non-matches using the
C4.5 decision tree classifier.
Another efficient multi-party approach
for \emph{categorical data} was recently proposed
by Karapiperis et al.~\cite{Kara15}
using a Count-Min 
sketch data structure.
Sketches are used to summarize
the local set of elements which are then intersected to provide a 
global synopsis using homomorphic operations, secure summation, and symmetric noise
addition privacy techniques.

As described in Section~\ref{sec-competing-methods}, Lai et al.~\cite{Lai06}'s approach
uses efficient Bloom filter encoding for masking string data
in multi-party PPRL. However, the approach 
performs only \emph{exact matching}.
This approach has been adapted by Vatsalan and 
Christen~\cite{Vat14c,Vat16c} for \emph{approximate
matching} in multi-party PPRL (as described in detail in Section~\ref{sec-competing-methods}).
Several approximate comparison functions for calculating
similarities of \emph{pairs} of string and other data types
have been proposed for PPRL 
by adapting existing comparison functions. 

A secure version of the
Levenshtein edit distance
was proposed by Karakasidis and Verykios~\cite{Kar11a} using CBFs, where
the elements of a string are checked against a CBF of another string
to count the number of edits, 
while the CBF also provides the length of the string masked into it.
However, developing privacy-preserving comparison functions for multiple 
(more than two) values has only recently been considered
by Vatsalan and Christen~\cite{Vat14c,Vat16c}
using the Dice coefficient similarity.

Scalability has been addressed by using private blocking functions and / or
filtering approaches~\cite{Vat13,Vat16d} as well as by parallelizing and / or distributing
the tasks among several nodes or parties~\cite{Kara13,Vat16d}. 
The scalability problem in multi-party PPRL
has only recently been focused on. Ranbaduge et al.~\cite{Ran15,Ran16} proposed 
a family of efficient tree-based private blocking techniques for multi-party PPRL
using BF encoding and secure summation. Several filtering approaches have
been used in PPRL of two sources, including 
length filtering in BFs~\cite{Vat12} and
PPJoin techniques~\cite{Seh15}. 
The recently proposed multi-party
BF-based 
PPRL approach~\cite{Vat14c,Vat16c} (described above)
employs a filtering approach based on the similarity of
segments of BFs.
However, the \emph{number of comparisons} required for multi-party linkage
\emph{remains very large} even with existing private blocking and
filtering approaches employed~\cite{Ran15,Vat14c}. 
Therefore, efficient multi-party filtering and communication patterns
need to be developed 
in order to make PPRL scalable and viable in practical applications.

Several query tree representations 
have been used for optimizing multi-way join
queries~\cite{Lu91,Sch90}. 
Schneider and DeWitt~\cite{Sch90} studied query processing plans with
different types of structures, left-deep, right-deep, and bushy.
Left-deep and right-deep trees use a base table as the inner
and outer operand, respectively, of each join in the plan, while
in bushy trees both inputs to a join may themselves result from joins.
However, these techniques have not been used
for efficient processing of multi-party PPRL.


\section{Conclusion}
\label{sec:con}

We have presented an efficient and scalable
protocol for PPRL of multiple databases
with two improved communication patterns.
Our approach performs approximate matching
on the QID values (of string data)
masked using efficient and simple privacy techniques,
counting BFs and secure summation.
Experiments conducted on real datasets
showed the efficiency and scalability of
our approach compared to two baseline approaches
while achieving superior results 
in terms of linkage quality and privacy.

In future work, we plan to 
investigate other improved communication patterns, 
collision resistant secure summation protocols, 
and different approximate string comparison functions~\cite{Chr12}
to be incorporated.
We also aim to develop efficient PPRL techniques
for identifying matching record sets within sub-sets of parties,
which is an important research problem.

Another research direction would be to
develop multi-party PPRL approaches under
other adversary models such as the covert
model~\cite{Aum10} or accountable computing~\cite{Jia08} 
(where honest parties can verify fake data
from dishonest parties with high probability)
to overcome the limitations
of the semi-honest (HBC) adversary
model.
Finally, we plan to investigate improved
classification techniques for multi-party PPRL
including relational clustering and
graph-based approaches~\cite{Chr12} which are successfully used in
non-PPRL applications.

\section{Acknowledgments}
This work was partially funded by the Australian Research Council
(ARC) under Discovery Projects DP130101801 and DP160101934, and Universities
Australia and the German Academic Exchange Service (DAAD).

\bibliographystyle{abbrv}
\bibliography{paper}  

\end{document}